\theoremstyle{plain}
\newtheorem{Theorem}{Theorem}
\newtheorem{Proposition}{Proposition}
\theoremstyle{definition}
\newtheorem{Remark}{Remark}
 \DeclareMathOperator{\adj}{adj}
 \DeclareMathOperator{\diag}{diag}
 \DeclareMathOperator{\so}{so}
\title[Elliptic curves and new integrable systems]
{Elliptic curves  and a new construction of integrable systems}
\author{Vladimir Dragovi\'c and Borislav Gaji\'c}
\address{
%Mathematical Institute SANU\\
%Kneza Mihaila 36, Belgrade\\
%Serbia
}
\begin{document}
\abstract A class of elliptic curves with associated Lax matrices
is considered. A family of dynamical systems on $e(3)$
parametrized by polynomial $a$ with above Lax matrices are
constructed. Five cases from the family are selected by the
condition of preserving the standard measure. Three of them are
Hamiltonian. It is proved that two  other cases are not
Hamiltonian in the standard Poisson structure on $e(3)$.
Integrability of all five cases is proven. Integration procedures
are performed in all five cases.  Separation of variables in
Sklyanin sense is also given. A connection with Hess-Appel'rot
system is established. A sort of separation of variables is
suggested for the Hess-Appel'rot system.

\endabstract
\maketitle
\tableofcontents
\section{Introduction}

Starting from a class of elliptic curves we construct a  class of
systems on the Lie algebra $e(3)$ by the Lax representation.
Equations of motion depend on an arbitrary polynomial $a$
and have the form:
$$
\dot x=\{x, H_1\}+a\{x, H_2\}
$$
From the condition that the standard measure is preserved, we
obtain five choices for $a$. In three cases, the systems are Hamiltonian. However, in two other cases they are not Hamiltonian
in the standard Poisson structure on $e(3)$. Integrability in all five cases is performed. For all five cases classical
integration procedure is presented. For the first three cases we gave an
algebro-geometric integration procedure also. Using the Lax matrix,
a separation of variables in Sklyanin sense in the first three cases  is shown. At the and, a connection with the Hess-Appel'rot system is
established. A sort of separation of variables is suggested.

\section{Lax representations and elliptic curves}

Let us start from a class of elliptic curves
\begin{equation}
\mu^2=-p-\frac{a}{\lambda}-\frac{b}{\lambda^2}-\frac{c}{\lambda^3}-\frac{d}{\lambda^4}.
\label{0.1}
\end{equation}
They have the following representation:
\begin{equation}
\mu^2=\frac{\omega^2(\lambda)}{\lambda^4}-2\frac{\Delta(\lambda)\Delta^*(\lambda)}{\lambda^4},
\label{0.2}
\end{equation}
where
$$
\omega(\lambda)=-i(q\lambda^2+x_1\lambda+y_1),\ \ \Delta(\lambda)=x\lambda+y,\ \
\Delta^*(\lambda)=\bar x\lambda+\bar y.
$$
Such curves appear as spectral curves of Lax matrices of the
form
\begin{equation}
L(\lambda)=\left[\begin{matrix}\frac{\omega(\lambda)}{\lambda^2}&\sqrt{2}i\frac{\Delta(\lambda)}{\lambda^2}\\
 & \\
\sqrt{2}i\frac{\Delta^*(\lambda)}{\lambda^2}&-\frac{\omega(\lambda)}{\lambda^2}\end{matrix}\right].
\label{2}
\end{equation}

We will consider a  case that corresponds to real coefficients $a,
b, c, d$.

Let the functions $M_1, M_2, M_3, \Gamma_1, \Gamma_2, \Gamma_3$ be
generators of Lie algebra  $e(3)$  with Poisson structure defined
by the relations
\begin{equation}
\{M_i, M_j\}=-\epsilon_{ijk} M_k,\ \ \{M_i,
\Gamma_j\}=-\epsilon_{ijk} \Gamma_k,\ i,j,k=1,2,3. \label{3}
\end{equation}
Assume the following change of variables:
$$
\begin{aligned}
y&=\frac1{\sqrt{2}}(\beta\Gamma_1-\alpha\Gamma_3-i\Gamma_2),\ \ x=\frac1{\sqrt{2}}(\beta M_1-\alpha M_3-i M_2),\\
\bar{y}&=\frac1{\sqrt{2}}(\beta\Gamma_1-\alpha\Gamma_3+i\Gamma_2), \ \ \bar{x}=\frac1{\sqrt{2}}(\beta M_1-\alpha M_3+i M_2),\\
y_1&=\alpha\Gamma_1+\beta\Gamma_3, \ \ x_1=\alpha M_1+\beta M_3,\ \ q=I_2\sqrt{x_0^2+z_0^2},
\end{aligned}
$$
where $\alpha=\frac{x_0}{\sqrt{x_0^2+z_0^2}},\beta=\frac{z_0}{\sqrt{x_0^2+z_0^2}},$ and $x_0, z_0, I_2$ are constants.

In terms of $x, y, \bar x, \bar y, x_1, y_1$ the Poisson structure
\eqref{3} has the form
$$
\begin{aligned}
\{x,y\}&=0, & \{\bar{x}, \bar{y}\}&=0, & \{x,x_1\}&=ix, & \{\bar{x},x_1\}&=-i\bar{x},& \{y,x_1\}&=iy,\\
\{\bar{y},x_1\}&=-i\bar{y}, & \{y_1, x_1\}&=0, & \{\bar{x},y_1\}&=-i\bar{y},& \{x,y_1\}&=iy, & \{y_1, y\}&=0, \\
 \{y_1, \bar{y}\}&=0, & \{x, \bar{x}\}&=-ix_1, & \{y,\bar{y}\}&=0, & \{x,\bar{y}\}&=-iy_1, & \{\bar{x}, y\}&=iy_1.
\end{aligned}
$$

To each elliptic curve \eqref{0.2} we correspond a family of
dynamical systems:
\begin{equation}
\dot{L}(\lambda)=\frac{1}{2I_2}\left[L(\lambda),\frac{\lambda^2
L(\lambda)-a^2L(a)}{\lambda-a}\right]. \label{4}
\end{equation}
The matrix $L(\lambda)$ in \eqref{4} is of the form \eqref{2} and
$a$ is an arbitrary polynomial in generators of algebra $e(3)$.

Observe that matrices $L$ given by \eqref{2} satisfy
\begin{equation}
\left\{\overset {1}{L}(\lambda), \overset{2}{L}(\mu)\right\}=
\left[r(\lambda-\mu),
\overset{1}{L}(\lambda)+\overset{2}{L}(\mu)\right], \label{4.5}
\end{equation}
where
$$
\overset{1}{L}(\lambda)=L(\lambda)\otimes\left[\begin{matrix} 1&0\\0&1\end{matrix}\right],\
\overset{2}{L}(\mu)=\left[\begin{matrix} 1&0\\0&1\end{matrix}\right]\otimes L(\mu),
$$
with permutation matrix as an $r$-matrix
$$
r(\lambda)=\frac{-1}{\lambda}
\left[\begin{matrix} 1&0&0&0\\
                     0&0&1&0\\
                     0&1&0&0\\
                     0&0&0&1
\end{matrix}\right].
$$
\section{Equations of motion}
The aim of this section is to provide the initial analysis of
dynamical systems \eqref{4}.

The Poisson bracket \eqref{3} is degenerative. As it is well
known, there are two Casimir functions:
\begin{equation}
F_1=M_1\Gamma_1+M_2\Gamma_2+M_3\Gamma_3,\ \ F_2=\Gamma_1^2+\Gamma_2^2+\Gamma_3^2.
\label{4a}
\end{equation}
Thus, a symplectic leaf, defined by conditions  $F_1=c_1, F_2=c_2$
is a four-dimensional manifold. For integrability in Liouville
sense on $e(3)$, one first integral more beside the Hamiltonian is
necessary. On the other hand, if a system is not Hamiltonian,
generally speaking, five first integrals of motion for
integrability in quadratures are required. But, if a
nonhamiltonian system has an  invariant measure, then, according
to the Jacobi theorem, for integrability one needs only four first
integrals of motion.

Equations \eqref{4} can be rewritten in the form
\begin{equation}
\aligned
\dot{M}_1&=z_0\Gamma_2 +az_0M_2,\\
\dot{M}_2&=x_0\Gamma_3-z_0\Gamma_1+a(x_0M_3-z_0M_1),\\
\dot{M}_3&=-x_0\Gamma_2 -ax_0M_2,\\
\dot{\Gamma}_1&=\frac{\Gamma_2M_3-\Gamma_3M_2}{I_2}+az_0\Gamma_2,\\
\dot{\Gamma}_2&=\frac{\Gamma_3M_1-\Gamma_1M_3}{I_2}+a(x_0\Gamma_3-z_0\Gamma_1),\\
\dot{\Gamma}_3&=\frac{\Gamma_1M_2-\Gamma_2M_1}{I_2}-ax_0\Gamma_2.
\endaligned
\label{5}
\end{equation}
We have the following Proposition.
\begin{Proposition}{\label{p1}} System \eqref{5} can be rewritten as:
\begin{equation}
\aligned
\dot{M}_i&=\{M_i, H_1\}+a\{M_i, H_2\},\\
\dot{\Gamma}_i&=\{\Gamma_i, H_1\}+a\{\Gamma_i, H_2\},\quad i=1,2,3,
\endaligned
\label{6}
\end{equation}
where
$$
H_1=\frac{M_1^2+M_2^2+M_3^2}{2I_2}+(x_0\Gamma_1+z_0\Gamma_3),\quad
H_2=x_0M_1+z_0M_3.
$$
\end{Proposition}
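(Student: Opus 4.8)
The plan is to verify Proposition \ref{p1} by a direct computation: compute the right-hand side of \eqref{6} using the Poisson bracket \eqref{3} together with the explicit formulas for $H_1$ and $H_2$, and check that the result agrees term by term with \eqref{5}. First I would compute the Hamiltonian vector field of $H_1$. Writing $H_1 = \frac{1}{2I_2}(M_1^2+M_2^2+M_3^2) + x_0\Gamma_1 + z_0\Gamma_3$, I would use \eqref{3} to get, for instance, $\{M_1, H_1\} = \frac{1}{I_2}(M_2\{M_1,M_2\} + M_3\{M_1,M_3\}) + x_0\{M_1,\Gamma_1\} + z_0\{M_1,\Gamma_3\}$; since $\{M_1,\Gamma_1\}=0$, $\{M_1,\Gamma_3\}=-\epsilon_{132}\Gamma_2=\Gamma_2$, and the $M$-$M$ brackets vanish after contraction with $M_2, M_3$ by antisymmetry of $\epsilon$, this gives $\{M_1,H_1\} = z_0\Gamma_2$. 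Doing the analogous computation for $M_2, M_3$ and for $\Gamma_1,\Gamma_2,\Gamma_3$ — where now the $M$-$\Gamma$ brackets produce the Euler-type terms $\frac{\Gamma_jM_k-\Gamma_kM_j}{I_2}$ and the $\Gamma$-$\Gamma$ brackets vanish — reproduces exactly the $a$-independent part of \eqref{5}.

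Next I would compute the Hamiltonian vector field of $H_2 = x_0M_1 + z_0M_3$. Here only the $M$-$M$ and $M$-$\Gamma$ brackets contribute: $\{M_i, H_2\} = x_0\{M_i,M_1\} + z_0\{M_i,M_3\} = -x_0\epsilon_{i1k}M_k - z_0\epsilon_{i3k}M_k$ and $\{\Gamma_i,H_2\} = -x_0\epsilon_{i1k}\Gamma_k - z_0\epsilon_{i3k}\Gamma_k$. Evaluating the $\epsilon$ symbols gives $\{M_1,H_2\}=z_0M_2$, $\{M_2,H_2\}=x_0M_3-z_0M_1$, $\{M_3,H_2\}=-x_0M_2$, and the same pattern with $M$ replaced by $\Gamma$. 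Multiplying by $a$ and adding to the previous step yields precisely the right-hand sides of \eqref{5}, completing the verification.

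The computation is entirely routine; the only place where a little care is needed is the sign bookkeeping in the $\epsilon_{ijk}$ contractions and keeping track of the minus signs in \eqref{3}, together with checking that the purely kinetic terms $M_i\{M_i,M_j\}$ indeed cancel so that $H_1$ contributes no $a$-independent $M$-$M$ cross terms to $\dot M_i$. I would also remark, as a sanity check consistent with the surrounding discussion, that this presentation makes transparent why the choice of the polynomial $a$ matters: the flow is $\{\cdot,H_1\} + a\{\cdot,H_2\}$, which is a genuine Hamiltonian flow only when $a$ is constant (or, more generally, when $a\,dH_2$ is closed along the relevant leaves), and for non-constant $a$ one must separately analyze whether an invariant measure or a modified Poisson structure exists — the theme taken up in the subsequent sections.
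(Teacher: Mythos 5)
Your direct bracket computation is correct and is exactly the routine verification the paper intends: Proposition \ref{p1} is stated there without proof, and your term-by-term evaluation of $\{M_i,H_1\}$, $\{M_i,H_2\}$, $\{\Gamma_i,H_1\}$, $\{\Gamma_i,H_2\}$ (using $\{\Gamma_i,\Gamma_j\}=0$ and the cancellation of the kinetic $M$--$M$ terms) reproduces \eqref{5} exactly. The only quibble is with your closing aside, not the proof itself: the flow $\{\cdot,H_1\}+a\{\cdot,H_2\}$ is Hamiltonian not merely for constant $a$ but also when $a$ is a Casimir or $a=H_2$ (cases (i)--(iii) of Theorem \ref{t1}), as the paper later shows.
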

For a general polynomial $a$, the system \eqref{5} is neither
Hamiltonian in the Poisson structure \eqref{3}, nor preserves the
standard measure. Simple criterium for preserving the standard
measure is given by:
\begin{Proposition}{\label{p2}} The system \eqref{5} preserves the standard measure if and only if the polynomial
$a$ satisfies the condition:
$$
\{a, x_0M_1+z_0M_3\}=0.
$$
\end{Proposition}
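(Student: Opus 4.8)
The plan is to reduce the statement to Liouville's classical criterion: the flow of the vector field $X$ given by the right-hand side of \eqref{5} preserves the standard measure $dM_1\cdots d\Gamma_3$ on $\mathbb{R}^6$ if and only if $\divi X\equiv 0$, where the divergence is taken in the linear coordinates $(M_1,M_2,M_3,\Gamma_1,\Gamma_2,\Gamma_3)$. The whole argument then comes down to computing this divergence.

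First I would invoke Proposition \ref{p1} to write $X=X_{H_1}+a\,X_{H_2}$, where $X_{H_1}$ and $X_{H_2}$ are the Hamiltonian vector fields of $H_1$ and $H_2$ for the bracket \eqref{3}, so that $X_{H_i}(f)=\{f,H_i\}$. The next step is to check that $X_{H_1}$ and $X_{H_2}$ are themselves divergence-free. The quickest way is a direct inspection of \eqref{5}: after momentarily treating $a$ as a constant, the $M_i$-component of $\dot M_i$ contains no $M_i$ and the $\Gamma_i$-component of $\dot\Gamma_i$ contains no $\Gamma_i$ for each $i$, so all six diagonal partial derivatives vanish. Conceptually this is nothing but the statement that $e(3)$ is a unimodular Lie algebra, so every Hamiltonian flow of its Lie--Poisson structure preserves the Lebesgue measure.

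Finally, because $a$ is an honest function on $e(3)$ and not a constant, the Leibniz rule for the divergence gives
$$
\divi X=\divi X_{H_1}+a\,\divi X_{H_2}+X_{H_2}(a).
$$
By the previous step the first two terms vanish identically, and $X_{H_2}(a)=\{a,H_2\}=\{a,\,x_0M_1+z_0M_3\}$ by the definition of the Hamiltonian vector field together with $H_2=x_0M_1+z_0M_3$. Hence $\divi X\equiv 0$ precisely when $\{a,\,x_0M_1+z_0M_3\}=0$, which is the asserted condition. The computation is entirely routine; the only point demanding attention is to keep the two contributions of the Leibniz rule separate — the transport term $a\,\divi X_{H_2}$, which dies by unimodularity of $e(3)$, and the derivative term $X_{H_2}(a)$, which is exactly what produces the constraint on $a$.
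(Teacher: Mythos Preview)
Your proof is correct and follows the same approach as the paper: the paper's proof is the single sentence ``The divergence of the vector field in \eqref{5} is equal to $\{a,x_0M_1+z_0M_3\}$,'' and your argument simply unpacks this computation via the decomposition $X=X_{H_1}+a\,X_{H_2}$ and the Leibniz rule for the divergence. The conceptual framing through unimodularity of $e(3)$ is a nice touch, but the substance is identical.
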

\begin{proof}
The divergence of the vector field in \eqref{5} is equal to $\{a,
x_0M_1+z_0M_3\}$.
\end{proof}

As a consequence of Proposition \ref{p2} we have:

\begin{Proposition}{\label{p2a}}
In the following five cases, the  standard measure is preserved
\begin{description}
\item[(i)] if the polynomial $a$ is a Casimir function:
$a=M_1\Gamma_1+M_2\Gamma_2+M_3\Gamma_3$; \item[(ii)] if the
polynomial $a$ is a Casimir function:
$a=\Gamma_1^2+\Gamma_2^2+\Gamma_3^2$; \item[(iii)] if
$a=x_0M_1+z_0M_3$; \item[(iv)] if $a=x_0\Gamma_1+z_0\Gamma_3$;
\item[(v)]if $a=M_1^2+M_2^2+M_3^2$.
\end{description}
\end{Proposition}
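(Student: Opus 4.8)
The plan is to combine Proposition~\ref{p2} with a short case-by-case verification of the identity $\{a,\,x_0M_1+z_0M_3\}=0$, which by that proposition is necessary and sufficient for the invariance of the standard measure. So in each of the five cases it suffices to evaluate one Poisson bracket against the function $x_0M_1+z_0M_3$ using the structure relations \eqref{3}.

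For cases (i) and (ii) there is nothing to compute: the polynomials $a=M_1\Gamma_1+M_2\Gamma_2+M_3\Gamma_3$ and $a=\Gamma_1^2+\Gamma_2^2+\Gamma_3^2$ are exactly the Casimir functions $F_1$ and $F_2$ of \eqref{4a}, hence Poisson-commute with every function on $e(3)$, in particular with $x_0M_1+z_0M_3$. For case (iii) the vanishing is immediate from antisymmetry of the bracket, since $a=H_2=x_0M_1+z_0M_3$ itself and $\{H_2,H_2\}=0$.

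The only genuine computations are in cases (iv) and (v), and both are short manipulations with \eqref{3}. For (v) I would use that $\{M_1^2+M_2^2+M_3^2,\,M_j\}=-2\sum_i\epsilon_{ijk}M_iM_k=0$ for each $j$, because $\epsilon_{ijk}$ is antisymmetric in $i,k$ while $M_iM_k$ is symmetric; hence the bracket with $x_0M_1+z_0M_3$ vanishes. For (iv) I would expand
\[
\{x_0\Gamma_1+z_0\Gamma_3,\,x_0M_1+z_0M_3\}
= x_0^2\{\Gamma_1,M_1\}+x_0z_0\{\Gamma_1,M_3\}+x_0z_0\{\Gamma_3,M_1\}+z_0^2\{\Gamma_3,M_3\},
\]
observe that the two diagonal brackets $\{\Gamma_1,M_1\}$ and $\{\Gamma_3,M_3\}$ vanish, and that the cross terms contribute $x_0z_0\Gamma_2$ and $-x_0z_0\Gamma_2$, which cancel. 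The one point that requires care is the bookkeeping of signs in $\epsilon_{ijk}$ when passing between $\{M_i,\Gamma_j\}$ and $\{\Gamma_j,M_i\}$ (equivalently, remembering that the structure relations in \eqref{3} carry an overall minus sign); once that is handled the proposition follows at once from Proposition~\ref{p2}.
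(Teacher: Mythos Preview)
Your proposal is correct and follows exactly the approach the paper takes: the paper simply states that Proposition~\ref{p2a} is a consequence of Proposition~\ref{p2} and gives no further details, so your explicit case-by-case verification of $\{a,\,x_0M_1+z_0M_3\}=0$ is just a fleshed-out version of what the paper leaves implicit.
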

\begin{Theorem}{\label{t1}} If $x_0\ne0$, or $z_0\ne 0$, in the first three cases given above,
the systems are Hamiltonian, while in the fourth and the fifth
case, the systems are not Hamiltonian in the Poisson structure
\eqref{3}.
\end{Theorem}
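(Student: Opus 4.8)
The plan is to treat the two claims separately, using throughout that the vector field $X$ of \eqref{6} is tangent to every symplectic leaf $S_{c_1,c_2}=\{F_1=c_1,\ F_2=c_2\}$ (because $F_1,F_2$ are Casimirs, see \eqref{4a}), and that a vector field which is globally Hamiltonian on $e(3)$ restricts on each leaf to the Hamiltonian field of the restricted function, hence in particular must preserve the leaf's symplectic form $\omega$.

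For the first three cases I would argue directly from \eqref{6}. In cases (i) and (ii) the polynomial $a$ is a Casimir, so $\{x,a\}=0$ for every generator $x$ and therefore $a\{x,H_2\}=\{x,aH_2\}$; thus \eqref{6} reads $\dot x=\{x,\,H_1+aH_2\}$, i.e. the system is Hamiltonian with Hamiltonian $H=H_1+aH_2$. In case (iii) we have $a=x_0M_1+z_0M_3=H_2$, so $a\{x,H_2\}=\tfrac12\{x,H_2^2\}$ and \eqref{6} becomes $\dot x=\{x,\,H_1+\tfrac12H_2^2\}$, Hamiltonian with $H=H_1+\tfrac12H_2^2$. (Here the hypothesis $x_0\ne0$ or $z_0\ne0$ is harmless; it becomes essential only in the next cases.)

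For cases (iv) and (v) I would assume, for contradiction, that $X=X_H$ for some function $H$ on $e(3)$, where $X_f$ denotes the Hamiltonian field $\{\cdot,f\}$. Writing $X=X_{H_1}+aX_{H_2}$, the difference $aX_{H_2}=X_{H-H_1}$ would be Hamiltonian on each leaf $S_{c_1,c_2}$, hence preserve $\omega$; since $\iota_{X_{H_2}}\omega=\pm dH_2$ on the leaf, Cartan's formula together with $d\omega=0$ gives $0=\mathcal{L}_{aX_{H_2}}\omega=\pm\,d(a\,dH_2)=\pm\,da\wedge dH_2$ on the leaf. So it suffices to produce a leaf on which the $2$-form $da\wedge dH_2$ is not identically zero. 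In case (iv), $da\wedge dH_2=(x_0\,d\Gamma_1+z_0\,d\Gamma_3)\wedge(x_0\,dM_1+z_0\,dM_3)$; on a leaf with $c_2\ne0$, at a generic point the functions $M_1,M_3,\Gamma_1,\Gamma_3$ form a local chart, so their differentials are independent and, as $(x_0,z_0)\ne(0,0)$, the $2$-form is nonzero. In case (v), $da\wedge dH_2=2\bigl[(z_0M_1-x_0M_3)\,dM_1\wedge dM_3-x_0M_2\,dM_1\wedge dM_2+z_0M_2\,dM_2\wedge dM_3\bigr]$; on a generic leaf $M_1,M_2,M_3,\Gamma_1$ form a local chart, so $dM_1,dM_2,dM_3$ are independent there, and the three coefficients cannot all vanish on an open set because $x_0M_2=z_0M_2=0$ would force $M_2\equiv0$. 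In either case $da\wedge dH_2\not\equiv0$, which is the desired contradiction.

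The bracket manipulations of the second paragraph and the wedge‑product computations of the third are routine. The one point requiring care — and the main obstacle — is the genericity claim: one must verify that, off a lower‑dimensional subset, the indicated coordinate functions do restrict to a local chart on $\{F_1=c_1,F_2=c_2\}$, i.e. that the relevant differentials stay linearly independent after imposing $dF_1=dF_2=0$. This follows from a short computation showing that the relevant $2\times2$ minors of the system $dF_1=dF_2=0$ (e.g. $\Gamma_2$, respectively $\Gamma_2M_3-\Gamma_3M_2$) are generically nonzero; once this is established, non‑closedness of $a\,dH_2$ on a generic leaf, and hence non‑Hamiltonicity of \eqref{6} in the structure \eqref{3}, is immediate.
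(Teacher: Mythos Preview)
Your argument is correct. For cases (i)--(iii) it coincides with the paper's. For cases (iv) and (v) the paper proceeds more computationally: it writes the requirement $a\{x^i,H_2\}=\{x^i,H\}$ as an explicit system of six linear first-order PDEs for $H$ (equations \eqref{8} and \eqref{10}), reduces each to four independent equations (\eqref{9}, \eqref{11}), and then extracts a contradiction from a single mixed second partial, for instance $\partial^2 H/\partial\Gamma_1\partial M_1-\partial^2 H/\partial M_1\partial\Gamma_1=x_0^2$ in case (iv). Your route via Cartan's formula, reducing non-Hamiltonicity to the non-closedness of $a\,dH_2$ on a generic leaf, is more conceptual and immediately yields a clean criterion ($da\wedge dH_2|_{S}\ne 0$) that applies to any choice of $a$; the paper's route is entirely elementary and avoids symplectic machinery, at the price of writing everything in coordinates. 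The two obstructions are essentially the same object: your $da\wedge dH_2\ne 0$ is the coordinate-free shadow of the paper's nonvanishing Schwarz-type commutator of mixed partials. Your genericity check (nonvanishing of the minors $2\Gamma_2^{\,2}$ and $2(M_2\Gamma_3-M_3\Gamma_2)$ of $d(F_1,F_2)$) is exactly what is needed and presents no difficulty.
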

\begin{proof}
If $a$ is a Casimir function, for an arbitrary function $f$ we
have
$$
\{f,H_1\}+a\{f,H_2\}=\{f,H_1+aH_2\}.
$$
Hence, in the first two cases the systems are Hamiltonian with
Hamiltonian functions
$$
H=H_1+aH_2.
$$
In the third case, since $a=H_2$, we have
$$
\{x^i,H_1\}+H_2\{x^i,H_2\}=\{x^i, H_1+\frac{H_2^2}{2}\},
$$
where $x^i, \ i=1,...,6$ are coordinates $M_1,M_2,M_3,\Gamma_1,
\Gamma_2, \Gamma_3$. Thus, the system is also Hamiltonian with the
Hamiltonian function
$$
H=H_1+\frac{H_2^2}{2}.
$$
The fourth and the fifth cases are  more complicated. For a system
\eqref{5} to be Hamiltonian, the polynomial $a$ needs to satisfy
$$
a\{x^i,H_2\}=\{x^i, H\}
$$
for some function $H$. We get the condition
\begin{equation}
\sum_{j=1}^6\{x^i, x^j\}\left(a\frac{\partial H_2}{\partial x^j}-\frac{\partial H}{\partial x^j}\right)=0.
\label{7}
\end{equation}
In the fourth case, when $a=x_0\Gamma_1+z_0\Gamma_3$, the system of linear partial differential equation
\eqref{7} becomes
\begin{equation}
\begin{aligned}
M_3\frac{\partial H}{\partial M_2}&-M_2\frac{\partial H}{\partial M_3}+x_0z_0\Gamma_1M_2+z_0^2\Gamma_3M_2+
\Gamma_3\frac{\partial H}{\partial\Gamma_2}-\Gamma_2\frac{\partial H}{\partial\Gamma_3}=0,\\
M_1\frac{\partial H}{\partial M_3}&-M_3\frac{\partial H}{\partial M_1}+x_0^2\Gamma_1M_3+x_0z_0\Gamma_3M_3\\
&-x_0z_0\Gamma_1M_1-z_0^2\Gamma_3M_1-\Gamma_3\frac{\partial H}{\partial\Gamma_1}+
\Gamma_1\frac{\partial H}{\partial\Gamma_3}=0,\\
M_2\frac{\partial H}{\partial M_1}&-M_1\frac{\partial H}{\partial M_2}-x_0^2\Gamma_1M_2-x_0z_0\Gamma_3M_2+
\Gamma_2\frac{\partial H}{\partial\Gamma_1}-\Gamma_1\frac{\partial H}{\partial\Gamma_2}=0,\\
\Gamma_3\frac{\partial H}{\partial M_2}&-\Gamma_2\frac{\partial H}{\partial M_3}+x_0z_0\Gamma_1\Gamma_2+
z_0^2\Gamma_3\Gamma_2=0,\\
\Gamma_1\frac{\partial H}{\partial M_3}&-\Gamma_3\frac{\partial H}{\partial M_1}+
x_0^2\Gamma_1\Gamma_3+x_0z_0\Gamma_3^2
-x_0z_0\Gamma_1^2-z_0^2\Gamma_3\Gamma_1=0,\\
\Gamma_2\frac{\partial H}{\partial M_1}&-\Gamma_1\frac{\partial H}{\partial M_2}-
x_0^2\Gamma_1\Gamma_2-x_0z_0\Gamma_3\Gamma_2=0.
\end{aligned}
\label{8}
\end{equation}

It is easy to see that the system \eqref{8} consists of only four
independent equations
\begin{equation}
\begin{aligned}
\Gamma_2\frac{\partial H}{\partial M_1}&=\Gamma_1\frac{\partial H}{\partial M_2}+x_0^2\Gamma_1\Gamma_2
+x_0z_0\Gamma_2\Gamma_3,\\
\Gamma_2\frac{\partial H}{\partial M_3}&=\Gamma_3\frac{\partial H}{\partial M_2}+x_0z_0\Gamma_1\Gamma_2
+z_0^2\Gamma_2\Gamma_3,\\
\Gamma_2\frac{\partial H}{\partial \Gamma_1}&=\Gamma_1\frac{\partial H}{\partial \Gamma_2}+
M_1\frac{\partial H}{\partial M_2}-\frac{\Gamma_1}{\Gamma_2}M_2\frac{\partial H}{\partial M_2},\\
\Gamma_2\frac{\partial H}{\partial \Gamma_3}&=\Gamma_3\frac{\partial H}{\partial \Gamma_2}+
M_3\frac{\partial H}{\partial M_2}-\frac{\Gamma_3}{\Gamma_2}M_2\frac{\partial H}{\partial M_2}.
\end{aligned}
\label{9}
\end{equation}
From the first and the third equation \eqref{9}, one gets
$$
\frac{\partial^2H}{\partial\Gamma_1\partial M_1}-\frac{\partial^2H}{\partial M_1\partial \Gamma_1}=x_0^2.
$$
If $x_0\ne0$, the condition \eqref{7} is not satisfied.
Consequently, the system is not Hamiltonian. In the case when
$z_0\ne 0$ one gets contradiction by observing that
$\frac{\partial^2H}{\partial\Gamma_3\partial
M_3}-\frac{\partial^2H}{\partial M_3\partial \Gamma_3}\ne0$.

Following the same procedure given below, in the fifth case, when $a=M_1^2+M_2^2+M_3^2$ system \eqref{7} becomes
\begin{equation}
\begin{aligned}
M_3\frac{\partial H}{\partial M_2}&-M_2\frac{\partial H}{\partial M_3}+z_0M_2(M_1^2+M_2^2+M_3^2)+
\Gamma_3\frac{\partial H}{\partial\Gamma_2}-\Gamma_2\frac{\partial H}{\partial\Gamma_3}=0,\\
M_1\frac{\partial H}{\partial M_3}&-M_3\frac{\partial H}{\partial M_1}+x_0M_3(M_1^2+M_2^2+M_3^2)\\
&-z_0M_1(M_1^2+M_2^2+M_3^2)-\Gamma_3\frac{\partial H}{\partial\Gamma_1}+
\Gamma_1\frac{\partial H}{\partial\Gamma_3}=0,\\
M_2\frac{\partial H}{\partial M_1}&-M_1\frac{\partial H}{\partial M_2}-x_0M_2(M_1^2+M_2^2+M_3^2)
+\Gamma_2\frac{\partial H}{\partial\Gamma_1}-\Gamma_1\frac{\partial H}{\partial\Gamma_2}=0,\\
\Gamma_3\frac{\partial H}{\partial M_2}&-\Gamma_2\frac{\partial H}{\partial M_3}+z_0\Gamma_2(M_1^2+M_2^2+M_3^2)=0,\\
\Gamma_1\frac{\partial H}{\partial M_3}&-\Gamma_3\frac{\partial H}{\partial M_1}+
(x_0\Gamma_3-z_0\Gamma_1)(M_1^2+M_2^2+M_3^2)=0,\\
\Gamma_2\frac{\partial H}{\partial M_1}&-\Gamma_1\frac{\partial H}{\partial M_2}-
x_0\Gamma_2(M_1^2+M_2^2+M_3^2)=0.
\end{aligned}
\label{10}
\end{equation}
System \eqref{10} has four independent equations
\begin{equation}
\begin{aligned}
\Gamma_2\frac{\partial H}{\partial M_1}&=\Gamma_1\frac{\partial H}{\partial M_2}+x_0\Gamma_2(M_1^2+M_2^2+M_3^2),\\
\Gamma_2\frac{\partial H}{\partial M_3}&=\Gamma_3\frac{\partial H}{\partial M_2}+z_0\Gamma_2(M_1^2+M_2^2+M_3^2),\\
\Gamma_2\frac{\partial H}{\partial \Gamma_1}&=\Gamma_1\frac{\partial H}{\partial \Gamma_2}+
M_1\frac{\partial H}{\partial M_2}-\frac{\Gamma_1}{\Gamma_2}M_2\frac{\partial H}{\partial M_2},\\
\Gamma_2\frac{\partial H}{\partial \Gamma_3}&=\Gamma_3\frac{\partial H}{\partial \Gamma_2}+
M_3\frac{\partial H}{\partial M_2}-\frac{\Gamma_3}{\Gamma_2}M_2\frac{\partial H}{\partial M_2}.
\end{aligned}
\label{11}
\end{equation}
From the first and the third equation \eqref{11}, one get
$$
\frac{\partial^2H}{\partial\Gamma_1\partial M_1}-\frac{\partial^2H}{\partial M_1\partial \Gamma_1}=
2x_0\frac{M_2}{\Gamma_2}(M_1\Gamma_2-M_2\Gamma_1).
$$
Consequently, in the fifth case system is not Hamiltonian.
\end{proof}
Regarding integrability of given five cases, we have simple Proposition.
\begin{Proposition}{\label{p3}}
\item{\rm{(a)}} A function $F$ is a first integral of equations \eqref{5} if it satisfies
$$
\dot{F}=\{F,H_1\}+a\{F,H_2\}=0.
$$
\item{\rm{(b)}} The Casimir functions $F_1$ and $F_2$ and
functions $H_1$ and $H_2$ are integrals of system \eqref{5} for
any polynomial $a$.
\end{Proposition}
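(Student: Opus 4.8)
The plan is to derive both parts directly from Proposition \ref{p1}, using only the derivation property of the Lie--Poisson bracket \eqref{3}; no new structure is needed.

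For (a) I would introduce coordinates $x^1,\dots,x^6$ for $M_1,M_2,M_3,\Gamma_1,\Gamma_2,\Gamma_3$, as in the proof of Theorem \ref{t1}. Since $a$ is a polynomial in the $x^i$, at any given point of $e(3)$ it is a scalar, so along a solution of \eqref{5} the chain rule together with the rewriting \eqref{6} yields
$$
\dot F=\sum_{i=1}^{6}\frac{\partial F}{\partial x^i}\,\dot x^i
=\sum_{i=1}^{6}\frac{\partial F}{\partial x^i}\Bigl(\{x^i,H_1\}+a\{x^i,H_2\}\Bigr)
=\{F,H_1\}+a\{F,H_2\},
$$
where the last equality uses that $\{\,\cdot\,,g\}$ is a derivation, i.e. $\{F,g\}=\sum_i\frac{\partial F}{\partial x^i}\{x^i,g\}$ for every smooth $g$. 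Hence $F$ is a first integral of \eqref{5} if and only if $\{F,H_1\}+a\{F,H_2\}=0$, which is the claim.

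For (b) I would apply the criterion from (a). If $F$ is one of the Casimir functions $F_1,F_2$ of \eqref{3}, then $\{F,H_1\}=\{F,H_2\}=0$ by definition of a Casimir, so $\dot F=0$ for every polynomial $a$. For the Hamiltonians, antisymmetry of the bracket gives $\dot H_1=a\{H_1,H_2\}$ and $\dot H_2=-\{H_1,H_2\}$, so everything reduces to the single identity $\{H_1,H_2\}=0$. This is the only step with any content, and I expect it to be the (mild) obstacle: the cleanest way to see it is to note that $H_2=x_0M_1+z_0M_3$ generates, via \eqref{3}, the one-parameter group of simultaneous rotations of $M$ and $\Gamma$ about the fixed axis $(x_0,0,z_0)$, whereas $H_1$ is built from invariants of this action --- the $\mathfrak{so}(3)$-Casimir $M_1^2+M_2^2+M_3^2$ and the linear form $x_0\Gamma_1+z_0\Gamma_3$; hence $H_1$ is invariant and $\{H_1,H_2\}=0$. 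Alternatively one checks it by a two-line computation from \eqref{3}: $\{M_1^2+M_2^2+M_3^2,\,x_0M_1+z_0M_3\}=0$ because the quadratic form is an $\mathfrak{so}(3)$-Casimir, and $\{x_0\Gamma_1+z_0\Gamma_3,\,x_0M_1+z_0M_3\}=x_0z_0\bigl(\{\Gamma_1,M_3\}+\{\Gamma_3,M_1\}\bigr)=x_0z_0(\Gamma_2-\Gamma_2)=0$. Either way $\{H_1,H_2\}=0$, so $\dot H_1=\dot H_2=0$ for every $a$, completing the proof.
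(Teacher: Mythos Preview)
Your argument is correct. The paper states this proposition as ``simple'' and gives no proof at all, so there is nothing to compare against; your derivation via Proposition~\ref{p1}, the Leibniz rule for the bracket, and the verification that $\{H_1,H_2\}=0$ is exactly the intended elementary computation.
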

Finally we have:
\begin{Theorem}{\label{t2}} The system \eqref{5} in cases (i)-(iii) is completely integrable in Liouville sense.
In cases (iv) and (v), system \eqref{5} is integrable in quadratures.
\end{Theorem}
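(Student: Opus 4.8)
The plan is to treat the two groups of cases by parallel but distinct arguments. In cases (i)--(iii) the system is Hamiltonian by Theorem \ref{t1}, so the goal is to exhibit two functionally independent first integrals in involution on a generic symplectic leaf. In cases (iv)--(v) the system is not Hamiltonian but preserves the standard measure by Proposition \ref{p2a}, so the plan is to invoke the Euler--Jacobi (last multiplier) theorem using four independent first integrals.

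For cases (i)--(iii): by Proposition \ref{p3}(b) the functions $F_1, F_2, H_1, H_2$ are first integrals of \eqref{5} for every polynomial $a$. First I would record that these four functions are, up to nonzero constant factors and an additive constant, exactly the coefficients of the spectral curve: substituting the change of variables into \eqref{0.1}--\eqref{0.2} and expanding $-\det L(\lambda)=\tfrac12\operatorname{tr} L(\lambda)^2$ in powers of $1/\lambda$ identifies the coefficients with multiples of $q^2$, $H_2$, $H_1$, $F_1$ and $F_2$. Since $L(\lambda)$ satisfies the linear relation \eqref{4.5} with a numerical $r$-matrix, the standard argument gives $\{\operatorname{tr} L(\lambda)^2,\operatorname{tr} L(\mu)^2\}=0$, so these four integrals are pairwise in involution; in particular $\{H_1,H_2\}=0$, which can also be checked directly from \eqref{3}. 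As $F_1$ and $F_2$ are Casimirs, on a generic symplectic leaf $\{F_1=c_1,\,F_2=c_2\}$ --- a four-dimensional symplectic manifold --- the pair $H_1,H_2$ is independent and involutive; by the proof of Theorem \ref{t1} the Hamiltonian is respectively $H=H_1+F_1H_2$, $H=H_1+F_2H_2$ and $H=H_1+\tfrac12H_2^2$, hence a function of $H_1,H_2$ and the Casimirs. Therefore $H_1,H_2$ form a complete involutive family of independent first integrals on the leaf, containing the Hamiltonian among their functions, and the system is Liouville integrable.

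For cases (iv)--(v): here $a=x_0\Gamma_1+z_0\Gamma_3$ and $a=M_1^2+M_2^2+M_3^2$ both satisfy $\{a,\,x_0M_1+z_0M_3\}=0$, so by Proposition \ref{p2a} the standard volume on $e(3)$ is invariant, while by Proposition \ref{p3}(b) the same four functions $F_1,F_2,H_1,H_2$ remain first integrals. I would then apply the Euler--Jacobi theorem: a volume-preserving vector field on a six-dimensional manifold with $6-2=4$ independent first integrals is integrable by quadratures, the restriction of the flow to a regular connected component of a common level set $\{F_1=c_1,\,F_2=c_2,\,H_1=h_1,\,H_2=h_2\}$, a two-dimensional invariant surface, reducing to a single quadrature (and to conditionally periodic motion when this surface is compact). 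Equivalently, one may first restrict to the four-dimensional symplectic leaf, transport the invariant volume to an invariant area there by the Gelfand--Leray construction, and apply Euler--Jacobi with the two integrals $H_1,H_2$. This yields integrability in quadratures in both cases.

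The step that requires genuine, if routine, work is verifying functional independence of $F_1,F_2,H_1,H_2$ on an open dense subset of $e(3)$, i.e. that their $4\times6$ Jacobian has rank $4$ there; this is what guarantees that the relevant common level sets are generically regular two-dimensional surfaces, so that the hypotheses of the Liouville and Euler--Jacobi theorems genuinely hold. Beyond this bookkeeping I anticipate no real obstacle, since the integrals are already provided by Proposition \ref{p3} and both classical integrability criteria apply almost verbatim.
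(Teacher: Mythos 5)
Your proposal is correct and follows essentially the same route the paper intends: the paper gives no explicit proof of Theorem \ref{t2}, relying instead on the discussion in Section 3 (four-dimensional symplectic leaves, and the Jacobi theorem for measure-preserving systems with four integrals) together with Propositions \ref{p2a}, \ref{p3} and Theorem \ref{t1}, which is exactly the argument you spell out, adding the routine verifications ($\{H_1,H_2\}=0$, functional independence) that the paper leaves implicit. The only difference is cosmetic: the paper also substantiates integrability constructively via the explicit reduction to elliptic quadratures in Section 5, whereas you invoke the Liouville and Euler--Jacobi theorems abstractly.
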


\section{Algebro-geometric integration procedure of the systems}

We pass to an algebro-geometric integration procedure, based on
construction of the Baker-Akhiezer vector-function. For more
details about the Baker-Akhiezer functions see \cite{D1, D2, DKN}.

The dynamical systems have Lax representation
$$
\dot{L}(\lambda)=[L(\lambda), A(\lambda)],
$$
with $L(\lambda)$ given by \eqref{2} and
$$
A(\lambda)=\frac{\lambda^2 L(\lambda)-a^2L(a)}{2I_2(\lambda-a)}.
$$
The corresponding spectral curve $\Gamma$ is elliptic curve \eqref{0.2}.

As  usual, we consider the following eigenvalue problem
\begin{equation}
\begin{aligned}
&\left(\frac{d}{dt}+A(\lambda)\right)\Psi(t,P)=0,\\
&L(\lambda)\Psi(t,P)=\mu\Psi(t,P),
\end{aligned}
\label{4.1}
\end{equation}
with a normalization
\begin{equation}
\Psi^1(0,P)+\Psi^2(0,P)=1,
\label{4.2}
\end{equation}
where $P=(\lambda, \mu)$ is a point on the spectral curve
$\Gamma$.

Let us denote by $\Phi(t,\lambda)$ the fundamental solution of the
system
$$
\left(\frac{d}{dt}+A(\lambda)\right)\Phi(t,\lambda)=0,
$$
normalized by the condition $\Phi(0,\lambda)=1$.

Let us denote by $\infty^+$ and $\infty^-$ the two points on the
curve $\Gamma$ over $\lambda=\infty$, with
$\mu=iI_2\sqrt{x_0^2+z_0^2}$ and $\mu=-iI_2\sqrt{x_0^2+z_0^2}$
respectively.
\begin{Proposition}{\label{p4.1}} If the polynomial $a$ is  a first integral of  motion, then
the vector-function $\Psi(t,P)$ satisfies the following
conditions:
\begin{itemize}
\item[(a)] In the affine part of the curve $\Gamma$, the
vector-function $\Psi(t,P)$ has two time independent poles, and
each of the components $\Psi^1(t,P)$ and $\Psi^2(t,P)$ has one
zero. \item[(b)] At the points $\infty^+$ and $\infty^-$, the
functions $\Psi^1$ and $\Psi^2$ have essential singularities with
the following asymptotics:
$$
\begin{aligned}
\Psi^1(t,P)&=\begin{cases}e^{\frac i2\left(\sqrt{x_0^2+z_0^2}\ (\lambda+a)+\frac{x_1}{I_2}\right)t}
\left(1+O(\frac 1\lambda)\right), \quad &P\rightarrow \infty^-\\
e^{-\frac i2\left(\sqrt{x_0^2+z_0^2}\ (\lambda+a)+\frac{x_1}{I_2}\right)t}
\left(O(\frac 1\lambda)\right), \quad &P\rightarrow \infty^+
\end{cases}\\
\Psi^2(t,P)&=\begin{cases}e^{\frac i2\left(\sqrt{x_0^2+z_0^2}\ (\lambda+a)+\frac{x_1}{I_2}\right)t}
\left(O(\frac 1\lambda)\right), \quad &P\rightarrow \infty^-\\
e^{-\frac i2\left(\sqrt{x_0^2+z_0^2}\ (\lambda+a)+\frac{x_1}{I_2}\right)t}
\left(1+O(\frac 1\lambda)\right), \quad &P\rightarrow \infty^+
\end{cases}
\end{aligned}
$$
\item[(c)] The asymptotics have the form
$$
\begin{aligned}
\Psi^1(t,P)&=e^{-\frac i2\left(\sqrt{x_0^2+z_0^2}\ (\lambda+a)+\frac{x_1}{I_2}\right)t}
\left(\frac{x}{I_2\sqrt{2}\sqrt{x_0^2+z_0^2}}\frac 1\lambda+O(1/\lambda^2)\right), \quad &P\rightarrow \infty^+\\
\Psi^2(t,P)&=e^{\frac i2\left(\sqrt{x_0^2+z_0^2}\ (\lambda+a)+\frac{x_1}{I_2}\right)t}
\left(-\frac{\bar{x}}{I_2\sqrt{2}\sqrt{x_0^2+z_0^2}}\frac 1\lambda+O(1/\lambda^2)\right), \quad &P\rightarrow \infty^-\\
\end{aligned}
$$
\end{itemize}
\end{Proposition}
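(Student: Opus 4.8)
The plan is to write $\Psi(t,P)$ down explicitly and then extract (a)--(c). Let $h(P)=\Psi(0,P)$ denote the eigenvector of $L(\lambda)$ at $t=0$ with eigenvalue $\mu$, normalized by \eqref{4.2}. From the shape \eqref{2} of $L$ one gets, on the affine part of $\Gamma$,
$$
h(P)=\frac{\bigl(\sqrt2\,i\,\Delta(\lambda),\ \mu\lambda^2-\omega(\lambda)\bigr)}{g(P)},\qquad g(P):=\sqrt2\,i\,\Delta(\lambda)+\mu\lambda^2-\omega(\lambda),
$$
the proportional representative $\bigl(\mu\lambda^2+\omega(\lambda),\ \sqrt2\,i\,\Delta^*(\lambda)\bigr)$ being the convenient one near $\infty^-$. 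Set $\Psi(t,P):=\Phi(t,\lambda)\,h(P)$. Then the first line of \eqref{4.1} holds by the definition of $\Phi$, and \eqref{4.2} holds at $t=0$; the second line of \eqref{4.1} holds for all $t$ because the Lax equation $\dot L=[L,A]$ yields $\Phi^{-1}L(t,\lambda)\Phi=L(0,\lambda)$, so $\Phi$ maps eigenvectors of $L(0,\lambda)$ to eigenvectors of $L(t,\lambda)$ with the same eigenvalue. Thus $\Psi$ so defined is the Baker--Akhiezer function of the problem.

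For (a) the crucial point is that, since $\lambda^2L(\lambda)-a^2L(a)$ is divisible by $\lambda-a$, the matrix $A(\lambda)=(\lambda^2L(\lambda)-a^2L(a))/(2I_2(\lambda-a))$ is a \emph{polynomial} of degree one in $\lambda$; hence $\Phi(t,\lambda)$ is entire in $\lambda$ for every $t$, and $\det\Phi\equiv 1$ since $A$ is traceless. Consequently $\Phi(t,\lambda(P))$ is holomorphic and invertible on the affine part of $\Gamma$, so the polar divisor of $\Psi(t,\cdot)$ there equals, \emph{for all $t$}, the polar divisor of $h$, i.e.\ the zeros of $g$ at $t=0$. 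A degree count on the elliptic curve $\Gamma$ shows that $g$ has three zeros (it has a double pole at $\infty^+$ and a simple pole at $\infty^-$), one of which does not produce a pole of $\Psi$: over $\lambda=-y/x$ the condition $\Delta(\lambda)=0$ forces $(\mu\lambda^2)^2=\omega(\lambda)^2$ by \eqref{0.2}, so on the sheet $\mu\lambda^2=\omega(\lambda)$ both $g$ and the numerator of $h$ vanish. This leaves exactly two poles, time-independent. The same argument applied to $\Psi^1(0,\cdot)=\sqrt2\,i\,\Delta(\lambda)/g$ and $\Psi^2(0,\cdot)=(\mu\lambda^2-\omega(\lambda))/g$, and propagated by $\Phi$, gives one zero of each component in the affine part.

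For (b) and (c) I would analyze the eigenvalue problem near $\lambda=\infty$. Writing $L(\lambda)=\Lambda_2+\Lambda_1/\lambda+\Lambda_0/\lambda^2$ with $\Lambda_2=\diag(-iq,iq)$, and $A(\lambda)=\Lambda_2\lambda/(2I_2)+(a\Lambda_2+\Lambda_1)/(2I_2)$, first-order perturbation theory around $\Lambda_2$ gives the eigenvector as $e_2+\lambda^{-1}(x/(\sqrt2\,q),\,\ast)^{\top}+O(\lambda^{-2})$ near $\infty^+$ and $e_1+\lambda^{-1}(\ast,\,-\bar x/(\sqrt2\,q))^{\top}+O(\lambda^{-2})$ near $\infty^-$, with $x=x(t)$ and $\bar x=\bar x(t)$ the current values. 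Writing $\Psi=f\,v$ with $v$ this eigenvector, $(\partial_t+A)\Psi=0$ becomes a scalar transport equation for $f$ whose right-hand side, read off from the displayed form of $A$, is $\mp\tfrac i2\bigl(\sqrt{x_0^2+z_0^2}\,(\lambda+a)+x_1/I_2\bigr)+O(\lambda^{-1})$ on $\infty^{\pm}$; here the hypothesis that $a$ is a first integral is used to integrate $a$ to $a\,t$, while $x_1=(x_0M_1+z_0M_3)/\sqrt{x_0^2+z_0^2}=H_2/\sqrt{x_0^2+z_0^2}$ is automatically constant along trajectories by Proposition \ref{p3}. Normalizing so that $f=1$ at $t=0$ and recalling $q=I_2\sqrt{x_0^2+z_0^2}$, this yields the essential singularities and the two leading asymptotics in (b), and combining the $\lambda^{-1}$ term of $v$ with $f$ gives the refined $\lambda^{-1}$ coefficients $x/(I_2\sqrt2\sqrt{x_0^2+z_0^2})$ and $-\bar x/(I_2\sqrt2\sqrt{x_0^2+z_0^2})$ in (c).

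The main obstacle is the asymptotic step: one must verify that $\Psi$ carries a \emph{single} exponential at each of $\infty^{\pm}$ (no contamination of the $e^{-\cdots}$ branch by an $e^{+\cdots}$ term) and that the $\lambda^{-1}$ coefficient recovered is the current $x(t)$, not $x(0)$ --- which forces one to organize the formal $1/\lambda$-expansion of the product (eigenvector)$\times$(scalar integrating factor) consistently and to use that $a$, $x_1$ and the spectral data are conserved. The divisor count in (a) is otherwise routine once one has noticed that $A$ is a $\lambda$-polynomial; the only delicate point there is the cancellation of one zero of $g$ against a zero of the numerator over $\lambda=-y/x$.
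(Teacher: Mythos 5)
Your strategy is essentially the paper's: set $\Psi(t,P)=\Phi(t,\lambda)\Psi(0,P)$, use that $A(\lambda)$ is affine in $\lambda$ (so $\Phi$ is entire) to freeze the polar divisor, and read the behaviour at $\infty^{\pm}$ off logarithmic derivatives of the components; your explicit eigenvector $h=(\sqrt2\,i\Delta,\ \mu\lambda^2-\omega)/g$ and the cancellation over $\lambda=-y/x$ is a nice concrete version of the paper's terse ``one can calculate''. But there are two places to repair. First, in (a) the phrase ``propagated by $\Phi$'' does not give the zero count for $t\neq0$: $\Phi$ mixes the two components, so while the poles $P_1,P_2$ are time-independent, the zeros of $\Psi^1(t,\cdot)$ and $\Psi^2(t,\cdot)$ move with $t$, and your formulas $\Psi^1(0,\cdot)=\sqrt2\,i\Delta/g$, $\Psi^2(0,\cdot)=(\mu\lambda^2-\omega)/g$ settle the count only at $t=0$. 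The count for all $t$ comes from the residue theorem applied to $d\ln\Psi^i(t,\cdot)$: each component has two affine poles, is regular and nonzero at one infinity and vanishes to first order in $1/\lambda$ (times an exponential whose differential is residue-free) at the other, hence has exactly one affine zero. This is the paper's ``$d\ln\Psi_1$, $d\ln\Psi_2$'' argument, and it uses the asymptotics of (b)--(c), so the zero statement should be proved after them, not inside (a).

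Second, the ``main obstacle'' you flag in (b)--(c) is not actually open, and closing it is exactly the paper's computation: since $\dot L=[L,A]$ gives $L(t,\lambda)\Phi(t,\lambda)=\Phi(t,\lambda)L(0,\lambda)$, the vector $\Psi(t,P)$ is an exact eigenvector of $L(t,\lambda)$ with eigenvalue $\mu$ for every $t$, so the ratio $\Psi^2/\Psi^1=(\mu-L_{11})/L_{12}=L_{21}/(\mu-L_{22})$ is an algebraic function on $\Gamma$ built from the current entries of $L$. Hence $(\ln\Psi^1)^{\cdot}=-A_{11}-A_{12}\,\Psi^2/\Psi^1$ is an exact scalar identity; expanding $\mu$ and the entries at $\infty^{\pm}$ yields a single exponential (no contamination of branches) and, at $\infty^{+}$, the extra term $i\sqrt{x_0^2+z_0^2}\,y/x$, which the equation of motion $\dot x/x=ia\sqrt{x_0^2+z_0^2}+i\sqrt{x_0^2+z_0^2}\,y/x$ converts into $(\ln x)^{\cdot}$, so the $1/\lambda$ coefficient in (c) automatically carries the current $x(t)$ (similarly $\bar x(t)$ at $\infty^{-}$); the hypothesis that $a$ is an integral, together with $x_1=H_2/\sqrt{x_0^2+z_0^2}$ being an integral by Proposition \ref{p3}, is what lets the leading terms integrate to the factor $t$ in the exponent. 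Your ansatz $\Psi=f\,v$ with $v$ the normalized eigenvector of $L(t,\lambda)$ is the same computation once you use that $v$ is exact rather than merely a perturbative series; as written, this step is acknowledged but not carried out, so the proof of (b)--(c) is incomplete until you perform it.
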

\begin{proof} Since
$$
\Psi(t,P)=\Phi(t,\lambda)\Psi(0,P), \ \ \Phi(0,\lambda)=1,
$$
the poles of $\Psi(t,P)$ coincides with poles of $\Psi(0,P)$
($\Phi(t,\lambda)$ is holomorphic), so they are time-independent.
Using normalization \eqref{4.2}, and \eqref{4.1} one can calculate
that $\Psi(0,P)$ has two poles $P_1$ and $P_2$ in the affine part
of $\Gamma$. One can also conclude, using $d\ln\Psi_1$ and $d\ln\Psi_2$ that
each of the components $\Psi_1$ and $\Psi_2$ has a one zero. This proves part (a).

In order to find asymptotic of $\Psi^1$ at $\infty^+$ and
$\infty^-$ one needs to consider
$$
{(\ln(\Psi^1))}^\cdot=\frac{\dot\Psi^1}{\Psi^1}=-A_{11}-A_{12}\frac{\Psi^2}{\Psi_1}.
$$

Since
$\frac{\Psi^2}{\Psi^1}=\frac{\mu-L_{11}}{L_{12}}=\frac{L_{21}}{\mu-L_{22}}$,
and using asymptotics
$$
\mu=i\left[I_2\sqrt{x_0^2+z_0^2}+\frac{x_1}{\lambda}+O(1/\lambda)\right],\ \ P\rightarrow \infty^+,
$$
and
$$
\mu=-i\left[I_2\sqrt{x_0^2+z_0^2}+\frac{x_1}{\lambda}+O(1/\lambda)\right],\ \ P\rightarrow \infty^-,
$$
we have:
$$
{(\ln(\Psi^1))}^\cdot=-\frac{i}{2}\sqrt{x_0^2+z_0^2}\ (\lambda-a)-
\frac{ix_1}{2I_2}+i\sqrt{x_0^2+z_0^2}\ \frac{y}{x}+O(1/\lambda), \ \ P\rightarrow\infty^+,
$$
and
$$
{(\ln(\Psi^1))}^\cdot=\frac{i}{2}\sqrt{x_0^2+z_0^2}\ (\lambda+a)+
\frac{ix_1}{2I_2}+O(1/\lambda), \ \ P\rightarrow\infty^-.
$$
Since
$$
\frac{\dot{x}}{x}=ia\sqrt{x_0^2+z_0^2}+i\sqrt{x_0^2+z_0^2}\ \frac{y}{x},
$$
we  finally get:
$$
{(\ln(\Psi^1))}^\cdot=-\frac{i}{2}\sqrt{x_0^2+z_0^2}\ (\lambda+a)-
\frac{ix_1}{2I_2}+\frac{\dot x}{x}+O(1/\lambda), \ \ P\rightarrow\infty^+,
$$
and
$$
{(\ln(\Psi^1))}^\cdot=\frac{i}{2}\sqrt{x_0^2+z_0^2}\ (\lambda+a)+
\frac{ix_1}{2I_2}+O(1/\lambda), \ \ P\rightarrow\infty^-.
$$
Similarly, for the function $\Psi^2$ we have:
$$
{(\ln(\Psi^2))}^\cdot=-\frac{i}{2}\sqrt{x_0^2+z_0^2}\ (\lambda+a)-
\frac{ix_1}{2I_2}+O(1/\lambda), \ \ P\rightarrow\infty^+,
$$
and
$$
{(\ln(\Psi^2))}^\cdot=\frac{i}{2}\sqrt{x_0^2+z_0^2}\ (\lambda+a)+
\frac{ix_1}{2I_2}+\frac{\dot{\bar{x}}}{\bar{x}}+O(1/\lambda), \ \ P\rightarrow\infty^-.
$$
This proves parts (b) and (c).
\end{proof}

\begin{Remark} The vector-function $\Psi(t,P)$ satisfies standard conditions of the 2-point Baker-Akhiezer function.
From the parts (a) and (b) of Proposition \ref{p4.1} we can
construct $\Psi(t,P)$. Using the part (c), one can reconstruct
$x(t)$ (and $\bar{x}(t))$ from the Baker-Akhiezer function.
\end{Remark}
\begin{Remark}
All formulae from the proof of Proposition \ref{p4.1} are still
valid even if the polynomial $a$ is not an integral of  motion.
But in that case    statements (b) and (c) of the Proposition are
not valid any more.
\end{Remark}

Now we will give explicit formulae for the Baker-Akhiezer function
in terms of the Jacobi theta-function $\theta_{11}(z|\tau)$ with
characteristics $[\frac12,\frac12]$. Explicit formulas are similar
to the Lagrange case (see \cite{GZ}).

Let us fix the canonical basis of cycles $A$ and $B$ on $\Gamma$
($A\cdot B=1$), and let $\omega$ be the holomorphic differential
normalized by the conditions
$$
\oint_A\omega=2i\pi,\ \ \oint_B\omega=\tau.
$$
Theta-function $\theta_{11}(z|\tau)$ is defined by the relation
$$
\theta_{11}(z|\tau)=\sum_{-\infty}^{\infty} \exp\left[ \frac 12\tau(n+\frac12)+(z+i\pi)(n+\frac12)\right]
$$
and satisfies
$$
\theta_{11}(0)=0,\ \ \theta_{11}(z+2\pi i)=-\theta_{11}(z),\ \ \theta_{11}(z+\tau)=-\exp(-z-\frac12\tau)\theta_{11}(z).
$$

Let $\Omega^+$ and $\Omega^-$ be differentials of the second kind
with principal parts $-\frac{i}{2}\sqrt{x_0^2+z_0^2}\ d\lambda$ and
$+\frac{i}{2}\sqrt{x_0^2+z_0^2}\ d\lambda$ at $\infty^+$ and at
$\infty^-$ respectively, normalized by the condition that
$A$-periods are zero. Let us introduce differential $\Omega=\Omega^++\Omega^-$. We will denote by $U$ the
$B$-period of differential $\Omega$, and by
$c^+$ and $c^-$ the constants:
$$
\begin{aligned}
\int_{P_0}^P\Omega&=-\frac{i}{2}\sqrt{x_0^2+z_0^2}\ \lambda+c^++O(1/\lambda),\ \ P\rightarrow \infty^+\\
\int_{P_0}^P\Omega&=+\frac{i}{2}\sqrt{x_0^2+z_0^2}\ \lambda+c^-+O(1/\lambda),\ \ P\rightarrow \infty^-.
\end{aligned}
$$
\begin{Proposition} The Baker-Akhiezer functions are given by
$$
\begin{aligned}
\Psi^1(t,P)&=c_1\exp\left[(\int_{P_0}^P\Omega-c^-+\frac{i}{2}a+\frac{i}{2}\frac{x_1}{I_2})t\right]
\frac{\theta_{11}(\mathcal{A}(P+\infty^+-P_1-P_2)+tU)}{\theta_{11}({\mathcal{A}(\infty^++\infty^--P_1-P_2)+tU})},\\
\Psi^2(t,P)&=c_2\exp\left[(\int_{P_0}^P\Omega-c^+-\frac{i}{2}a-\frac{i}{2}\frac{x_1}{I_2})t\right]
\frac{\theta_{11}(\mathcal{A}(P+\infty^--P_1-P_2)+tU)}{\theta_{11}({\mathcal{A}(\infty^++\infty^--P_1-P_2)+tU})},\\
\end{aligned}
$$
where constants $c_1$ and $c_2$ are
$$
\begin{aligned}
c_1&=\frac{\theta_{11}(\mathcal{A}(P-\infty^+))\theta_{11}(\mathcal{A}(\infty^--P_1))
\theta_{11}(\mathcal{A}(\infty^--P_2))}{\theta_{11}(\mathcal{A}(\infty^--\infty^+))
\theta_{11}(\mathcal{A}(P-P_1))\theta_{11}(\mathcal{A}(P-P_2))},\\
c_2&=\frac{\theta_{11}(\mathcal{A}(P-\infty^-))\theta_{11}(\mathcal{A}(\infty^+-P_1))
\theta_{11}(\mathcal{A}(\infty^+-P_2))}{\theta_{11}(\mathcal{A}(\infty^+-\infty^-))
\theta_{11}(\mathcal{A}(P-P_1))\theta_{11}(\mathcal{A}(P-P_2))},\\
\end{aligned}
$$
and $\mathcal{A}$ is the Abel map, and $P_1$ and $P_2$ are the
poles of the  function $\Psi$.
\end{Proposition}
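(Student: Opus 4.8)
The plan is to show that the displayed expressions are the unique vector-function having the analytic properties recorded in Proposition \ref{p4.1} together with the normalization \eqref{4.2}, and then to check that the stated $c_1,c_2$ do enforce \eqref{4.2}. Recall the standard uniqueness statement for Baker--Akhiezer functions: on the elliptic curve $\Gamma$ there is, up to an overall scalar, exactly one function whose affine poles lie in the degree-two divisor $P_1+P_2$ and which has the prescribed exponential essential singularities at $\infty^+$ and $\infty^-$, since the difference of two candidates is a meromorphic function on $\Gamma$ with at most the poles $P_1,P_2$ and a zero forced at one of $\infty^\pm$, hence vanishes by Riemann--Roch. The Remark following Proposition \ref{p4.1} already records that parts (a) and (b) determine $\Psi$; so it suffices to verify that the theta-formula realizes (a), (b) and \eqref{4.2}.

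First I would check that each $\Psi^j(t,P)$ is single-valued on $\Gamma$, examining the exponential factor, the main theta-quotient and the prefactor $c_j$ together. Around the $A$-cycle the integral $\int_{P_0}^P\Omega$ is unchanged (the differential $\Omega=\Omega^++\Omega^-$ has vanishing $A$-periods) while $\mathcal A(P)$ increases by $\oint_A\omega=2\pi i$, under which $\theta_{11}(z)\mapsto-\theta_{11}(z)$; the sign changes of the relevant theta-factors cancel. Around the $B$-cycle the exponential picks up the factor $e^{tU}$, $U$ being the $B$-period of $\Omega$, while the theta-quotient and $c_j$ together pick up $e^{-tU}$ by the quasi-periodicity $\theta_{11}(z+\tau)=-\exp(-z-\tfrac12\tau)\theta_{11}(z)$ and the chosen normalization $\oint_A\omega=2\pi i,\ \oint_B\omega=\tau$; the two factors cancel, so $\Psi^j$ is a genuine function on $\Gamma$. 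Next I would read off the divisor. In $c_1$ the factor $\theta_{11}(\mathcal A(P-P_1))\theta_{11}(\mathcal A(P-P_2))$ in the denominator produces simple poles at $P_1,P_2$ and $\theta_{11}(\mathcal A(P-\infty^+))$ in the numerator a simple zero at $\infty^+$, the remaining theta-values being independent of $P$; together with the single $P$-zero of the main quotient this gives that $\Psi^1(t,P)$ has affine poles exactly at $P_1,P_2$ and a single affine zero, as claimed in (a), and symmetrically for $\Psi^2$.

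Finally I would analyze $P\to\infty^\pm$. By the defining property of $\Omega^\pm$ the Abelian integral $\int_{P_0}^P\Omega$ has singular part $\mp\tfrac i2\sqrt{x_0^2+z_0^2}\,\lambda$ and finite part $c^\pm$ there; since $a$ and $x_1=H_2/\sqrt{x_0^2+z_0^2}$ are first integrals (Proposition \ref{p3}), the extra terms $\pm\tfrac i2(a+x_1/I_2)\,t$ in the exponent are genuinely linear in $t$, and subtracting $c^\mp$ makes the exponential factor equal to $e^{\mp\frac i2(\sqrt{x_0^2+z_0^2}(\lambda+a)+x_1/I_2)t}\bigl(1+O(1/\lambda)\bigr)$, which matches the singular behaviour in (b); the theta-factors are holomorphic and non-zero at $\infty^\pm$, and the zero of $c_1$ at $\infty^+$ accounts for the extra $O(1/\lambda)$ of $\Psi^1$ there. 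The step I expect to be the main obstacle is pinning down the constants: one must check that the ratios of theta-constants defining $c_1,c_2$ make the leading coefficients in (b) equal to $1$ and, crucially, that $\Psi^1(0,P)+\Psi^2(0,P)\equiv 1$ in $P$. The latter is not a formality: at $t=0$ both summands are meromorphic on $\Gamma$ with poles only at $P_1,P_2$, so their sum lies in a space of dimension $\le 3$, and one identifies it with the constant $1$ by matching residues at $P_1,P_2$ and the value at one further point --- an instance of the classical three-term addition relation for $\theta_{11}$ on the elliptic curve. Once \eqref{4.2} is secured, uniqueness of the Baker--Akhiezer function completes the proof.
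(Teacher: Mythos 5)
Your proposal is correct and follows essentially the same route as the paper: the paper's proof is a one-line appeal to the general Baker--Akhiezer/theta-function theory (Dubrovin, Dubrovin--Krichever--Novikov, Gavrilov--Zhivkov), and your argument is precisely that standard construction written out --- single-valuedness of the formulas via the quasi-periodicity of $\theta_{11}$, the divisor count giving the poles $P_1,P_2$ and the zero at $\infty^\pm$, the exponential asymptotics matching Proposition \ref{p4.1}, and the normalization \eqref{4.2} via the genus-one three-term theta identity, all tied together by uniqueness of the two-point Baker--Akhiezer function. No genuinely different idea is involved, only more detail than the paper chose to give.
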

\begin{proof} The proof is based on general theory (see \cite{D1,D2,DKN,GZ}).
\end{proof}

\section{Classical integration of the systems}

We are going to show that in all considered five cases integration
of the systems can be reduced to  elliptic integrals.

Let us  change coordinates by the rotation of $xOz$ plane:
$$
\begin{alignedat}{3}
X_1&=\alpha M_1+\beta M_3 & \quad X_2&=M_2 & \quad X_3&=-\beta M_1+\alpha M_3\\
Y_1&=\alpha \Gamma_1+\beta \Gamma_3 & Y_2&=\Gamma_2 & Y_3&=-\beta \Gamma_1+\alpha \Gamma_3
\end{alignedat}
$$

Differential equations \eqref{5} of motion become
\begin{equation}
\begin{aligned}
\dot{X}_1&=0\\
\dot{X}_2&=\sqrt{x_0^2+z_0^2}\,(Y_3+aX_3)\\
\dot{X}_3&=-\sqrt{x_0^2+z_0^2}\,(Y_2+aX_2)\\
\dot{Y}_1&=\frac{1}{I_2}(X_3Y_2-X_2Y_3)\\
\dot{Y}_2&=\frac{1}{I_2}(X_1Y_3-X_3Y_1)+a\sqrt{x_0^2+z_0^2}\,Y_3\\
\dot{Y}_3&=\frac{1}{I_2}(X_2Y_1-X_1Y_2)-a\sqrt{x_0^2+z_0^2}\,Y_2.
\end{aligned}
\label{16}
\end{equation}

The first integrals are
\begin{equation}
\begin{aligned}
F_1&=X_1Y_1+X_2Y_2+X_3Y_3=c_1\\
F_2&=Y_1^2+Y_2^2+Y_3^2=c_2\\
H_1&=\frac{X_1^2+X_2^2+X_3^2}{2I_2\sqrt{x_0^2+z_0^2}}+Y_1=d_1\\
H_2&=X_1=d_2.
\end{aligned}
\label{17}
\end{equation}
From \eqref{16} we have:
$$
\begin{aligned}
\dot{X}_2^2+\dot{X}_3^2&=(x_0^2+z_0^2)\Big[(Y_2^2+Y_3^2)+a^2(X_2^2+X_3^2)+2a(X_2Y_2+X_3Y_3)\Big]\\
\dot{X}_2X_{3}-\dot{X}_3X_2&=\sqrt{x_0^2+z_0^2}\ \Big[X_2Y_2+X_3Y_3+a(X_3^2+X_2^2)\Big].
\end{aligned}
$$
Using the first integrals \eqref{17}, one gets
$$
\begin{aligned}
\dot{X}_2^2+\dot{X}_3^2&=(x_0^2+z_0^2)\Big[c_2-Y_1^2+a^2(X_2^2+X_3^2)+2a(c_1-d_2Y_1)\Big]\\
\dot{X}_2X_{3}-\dot{X}_3X_2&=\sqrt{x_0^2+z_0^2}\ \Big[c_1-d_2Y_1+a(X_3^2+X_2^2)\Big].
\end{aligned}
$$
Introducing  polar coordinates
$$
X_2=\rho\cos{\sigma},\quad X_3=\rho\sin{\sigma}
$$
the last equations become
\begin{equation}
\begin{aligned}
\dot{\rho}^2+\rho^2\dot{\sigma}^2&=(x_0^2+z_0^2)\left[c_2-Y_1^2+a^2\rho^2+2a(c_1-d_2Y_1)\right]\\
-\rho^2\dot{\sigma}&=\sqrt{x_0^2+z_0^2}\left[c_1-d_2Y_1+a\rho^2\right].
\end{aligned}
\label{18}
\end{equation}
 Let us subtract the
square of the second equation multiplied by 4 from the first equation multiplied
by $4\rho^2$. After simplifying one gets
$$
\left[\frac{d}{dt}(\rho^2)\right]^2=4(x_0^2+z_0^2)\left[\rho^2(c_2-Y_1^2)-(c_1-d_2Y_1)^2\right].
$$
Finally, using
\begin{equation}
Y_1=d_1-\frac{d_2^2+\rho^2}{2I_2\sqrt{x_0^2+z_0^2}}
\label{18a}
\end{equation}
and denoting $\rho^2=u$, we get
\begin{equation}
\dot{u}^2=-\frac{u^3}{I_2^2}-Bu^2-Cu-D
\label{19}
\end{equation}
where
$$
\begin{aligned}
B&=\frac{-4A\sqrt{x_0^2+z_0^2}}{I_2}+\frac{d_2^2}{I_2^2}\\
C&=4(x_0^2+z_0^2)\Big(A^2-c_2+\frac{d_2(c_1-d_2A)}{I_2\sqrt{x_0^2+z_0^2}}\Big)\\
D&=4(x_0^2+z_0^2)(c_1-d_2A)^2\\
A&=d_1-\frac{d_2^2}{2I_2\sqrt{x_0^2+z_0^2}}.
\end{aligned}
$$
So, the following proposition is proved:
\begin{Proposition} The function $u(t)$ is an elliptic function of time.
\end{Proposition}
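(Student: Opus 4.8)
The plan is to observe that \eqref{19} is, after an elementary affine normalization of the dependent variable, precisely the differential equation defining the Weierstrass $\wp$-function, so that $u(t)$ is obtained by inverting an elliptic integral of the first kind. First I would put \eqref{19} into separated form,
$$
dt=\frac{du}{\sqrt{\,-u^{3}/I_2^{2}-Bu^{2}-Cu-D\,}}\,,
$$
which exhibits $t$, up to an additive constant, as an integral along the curve $w^{2}=-u^{3}/I_2^{2}-Bu^{2}-Cu-D$. Because the cubic on the right has degree three in $u$, this is an abelian integral on a curve of genus one, and its inverse function $u(t)$ extends to a single-valued meromorphic, doubly periodic function on the whole $t$-plane, i.e.\ an elliptic function.

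To make the identification explicit I would apply the substitution $u=-4I_2^{2}\,v-\tfrac13 I_2^{2}B$, chosen so as to normalize the leading coefficient to $4$ and to annihilate the quadratic term. A short computation turns \eqref{19} into the Weierstrass canonical form
$$
\dot v^{2}=4v^{3}-g_2 v-g_3,
$$
with $g_2$ and $g_3$ explicit polynomials in $B,C,D$, hence in the integration constants $c_1,c_2,d_1,d_2$ and the parameters $x_0,z_0,I_2$. Consequently $v(t)=\wp(t-t_0;g_2,g_3)$ for a constant $t_0$ determined by the initial data, and therefore
$$
u(t)=-4I_2^{2}\,\wp(t-t_0;g_2,g_3)-\tfrac13 I_2^{2}B
$$
is an elliptic function of $t$, as claimed.

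The single point that deserves a word is non-degeneracy: $u(t)$ is a genuine (doubly periodic, non-degenerate) elliptic function exactly when the cubic in \eqref{19} has three distinct roots, equivalently when the modular discriminant $g_2^{3}-27g_3^{2}$ does not vanish; on the complementary locus one obtains the classical degenerations to rational or to simply periodic functions. Since $g_2^{3}-27g_3^{2}$ is a nonzero polynomial in the integration constants, the strict statement holds for an open dense set of initial data, while in every case $u(t)$ is given by the $\wp$-formula above. I do not anticipate any real obstacle: once \eqref{19} has been derived, the remainder is the standard Weierstrass inversion, and the substantive computational work lies in the reduction already carried out before the statement.
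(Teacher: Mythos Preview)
Your proposal is correct and is essentially the same as the paper's approach, only more explicit: the paper simply states ``So, the following proposition is proved'' immediately after deriving \eqref{19}, treating it as self-evident that a first-order ODE of the form $\dot u^{2}=\text{cubic in }u$ is solved by an elliptic function. Your explicit affine substitution to Weierstrass form and the remark on the discriminant fill in exactly the routine details the paper leaves to the reader.
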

Let us remark that $u$ (and consequently $\rho$) does not depend
of a choice of the polynomial $a$.

Having $u(t)$ as a known function of time, one can find $\rho(t)$ as a known function of time. In order to
reconstruct $X_2$ and $X_3$, one needs to find $\sigma$ as a function of time.

From the second equation of \eqref{18}, using \eqref{18a}, we have
$$
\dot{\sigma}=-\frac{1}{\rho^2(t)}\sqrt{x_0^2+z_0^2}\left[c_1-d_2\left(d_1-\frac{d_2^2+\rho^2(t)}{2I_2\sqrt{x_0^2+z_0^2}}\right)+a\,\rho^2(t)\right]
$$
The right hand side of the last equation is a function of time and
of the polynomial $a$. When $a$ is a first integral of  motion (in
the first three cases defined above) e.q. when $a=const$, then
right hand side of the last equation is a known function of time
and one can find $\sigma$ by quadratures. In the fourth case
$$
a=x_0\Gamma_1+z_0\Gamma_3=\sqrt{x_0^2+z_0^2}\,Y_1=\sqrt{x_0^2+z_0^2}\,d_1-\frac{d_2^2+\rho^2(t)}{2I_2\sqrt{x_0^2+z_0^2}}.
$$
So, in this case $a(t)$ is a known function of time and one can
find $\sigma$ by solving a differential equation. Similarly, in
the fifth case
$$
a=M_1^2+M_2^2+M_3^2=X_1^2+X_2^2+X_3^2=d_2^2+\rho^2(t)
$$
is again a known function of time and a differential equation for
determining  $\sigma$ can be solved. Knowing $\rho$ and $\sigma$
as functions of time, one can reconstruct $X_2$ and $X_3$. From
\eqref{18a}, one finds $Y_1$ as a function of time. Finally, using
the differential equation for $Y_1$ from \eqref{16}, and the
second first integral $F_2=c_2$ from \eqref{17} one can
reconstruct $Y_2$ and $Y_3$.

Two elliptic curves appeared here. The first one $\Gamma$, has
been defined by the equation \eqref{0.1}, and it was the curve
from witch we started. The other one $\Gamma'$, given by
\begin{equation}
v^2=-\frac{u^3}{I_2^2}-Bu^2-Cu-D
\label{19a}
\end{equation}
corresponds to the solution of the differential equation
\eqref{19}. A natural question is how these two curves are
related. We have the following proposition:

\begin{Proposition} The elliptic curves $\Gamma$, defined by the equation \eqref{0.1} and $\Gamma'$ defined by \eqref{19a} are
isomorphic.
\end{Proposition}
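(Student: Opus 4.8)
The plan is to read the curve $\Gamma'$ off the algebro-geometric integration of Section 4 and to identify it with $\Gamma$ as a complex torus, resorting to an explicit comparison of coefficients only as a fallback.

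First I would invoke the Baker--Akhiezer description of Section 4: the theta-function formulae for $\Psi^1,\Psi^2$ exhibit the flow \eqref{16} as a linear motion $z\mapsto z+tU$ on $\mathrm{Jac}(\Gamma)$, which for an elliptic curve is canonically isomorphic to $\Gamma$ itself, with period lattice $\Lambda$. Consequently every component of the solution of \eqref{16} --- in particular
\[
u=X_2^2+X_3^2=2x\bar x,
\]
which by the remark following \eqref{19} does not depend on the choice of the polynomial $a$ --- is an elliptic function of $z$ for the lattice $\Lambda$, i.e. it descends to a rational function on $\Gamma$; likewise $\dot u$ does, and by \eqref{19} the pair $(u,\dot u)$ defines a morphism $\varphi\colon\Gamma\to\Gamma'$.

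Next I would pin down $\deg\varphi$. Since $I_2\neq0$, the cubic on the right of \eqref{19a} has nonzero leading coefficient, so matching leading terms at a pole of $u$ forces every pole of $u$ on $\Gamma$ to have order exactly two (a pole of order $m$ would require $2m+2=3m$). A count of zeros and poles of $u=2x\bar x$ from the theta-formulae of Section 4 --- the exponential factors of $\Psi^1$ and $\Psi^2$ cancel in the product, leaving a ratio of translates of $\theta_{11}$ with a single pair of zeros and a single pair of poles --- shows that $u$ has exactly one (double) pole, hence $\deg\bigl(u\colon\Gamma\to\mathbb{P}^1\bigr)=2$. As the projection $\Gamma'\to\mathbb{P}^1$, $(u,\dot u)\mapsto u$, also has degree two and $\varphi$ covers it, $\deg\varphi=1$, and a degree-one morphism of smooth projective curves is an isomorphism. (For the non-generic parameter values at which the cubic in \eqref{19a} is not square-free both curves are singular and there is nothing to prove, so one works on the complement of that discriminant locus.)

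The step I expect to be the real obstacle is this last pole count --- ruling out, e.g., two double poles, which would make $\varphi$ a genuine degree-two isogeny. To sidestep the theta-function bookkeeping there is a direct but computational alternative: substituting the change of variables of Section 2 together with the first integrals \eqref{17} and the relation \eqref{18a} into \eqref{0.2} gives the coefficients of \eqref{0.1} as
\[
p=I_2^2(x_0^2+z_0^2),\quad a=2I_2\sqrt{x_0^2+z_0^2}\,d_2,\quad b=2I_2\sqrt{x_0^2+z_0^2}\,d_1,\quad c=2c_1,\quad d=c_2,
\]
which incidentally re-proves that $p,a,b,c,d$ are first integrals. One then brings the quartic model $(\mu\lambda^2)^2=-(p\lambda^4+a\lambda^3+b\lambda^2+c\lambda+d)$ of $\Gamma$ to Weierstrass form via its classical invariants, completes the cube in \eqref{19a} to do the same for $\Gamma'$, and checks that the two $j$-invariants agree (equivalently, writes out the resulting birational change of variables). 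This reduces the Proposition to a finite polynomial identity in $c_1,c_2,d_1,d_2,I_2,x_0,z_0$ and the defining formulae for $A,B,C,D$.
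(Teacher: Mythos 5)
Your proposal is correct, and it in fact contains the paper's proof as its fallback: the paper's entire argument is the one-line assertion that a direct calculation shows the $j$-invariants of \eqref{0.1} and \eqref{19a} coincide, and your identification of the coefficients of \eqref{0.1} as $p=I_2^2(x_0^2+z_0^2)$, $a=2I_2\sqrt{x_0^2+z_0^2}\,d_2$, $b=2I_2\sqrt{x_0^2+z_0^2}\,d_1$, $c=2c_1$, $d=c_2$ is right (it is exactly the content of the separation relation written after Proposition \ref{p5}, up to the harmless substitution $\lambda\mapsto-\lambda$), so your fallback merely makes the paper's ``direct calculation'' concrete. Your primary route is genuinely different: instead of comparing invariants, it exhibits the isomorphism, using the linearization $z\mapsto z+tU$ on $\Jac(\Gamma)\cong\Gamma$ to view $u=2x\bar x$ and $\dot u$ as elliptic functions and then showing the resulting map $\varphi\colon\Gamma\to\Gamma'$ has degree one. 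The degree bookkeeping is sound ($2m+2=3m$ forces all poles of $u$ to be double; $\deg u=2\deg\varphi$), and the crucial single-pole count does come out of the theta formulae: the exponential prefactors of $x$ and $\bar x$ are reciprocal, so $u$ is proportional to $\theta_{11}(z+a_0)\theta_{11}(z-a_0)/\theta_{11}(z)^2$ with $a_0=\mathcal{A}(\infty^+-\infty^-)\not\equiv 0$, i.e. exactly one double pole per period. What the paper's (and your fallback's) computation buys is brevity and independence from Section 4; what your primary argument buys is an explicit uniformization of $\Gamma'$ by $\Gamma$ rather than an abstract isomorphy, and an explanation of why the two curves must agree. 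Two minor cautions: Proposition \ref{p4.1} was proved under the hypothesis that $a$ is an integral of motion, so in cases (iv)--(v) you must, as you implicitly do, lean on the observation that $u$ does not depend on the choice of $a$; and, like the paper, you should record the genericity assumption (square-free cubic in \eqref{19a}) under which both curves are genuinely elliptic, which you at least flag.
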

\begin{proof}
A direct calculation gives that $j$-invariant of both curves are
the same, so they are isomorphic.
\end{proof}

\section{Separation of variables}

One of the oldest methods in the theory of integrable dynamical
systems is separation of variables. Originally, this method was
built in order to find exact solutions of the Hamilton-Jacobi
equations. In the middle of 1990's, Sklyanin in his celebrated
paper \cite{S1}, introduced a new concept of separation of
variables which was related to modern techniques in the theory of
integrable systems such as the inverse scattering method and the
Lax representation.

If a completely integrable system on $2n$-dimensional symplectic
manifold is given with $n$ functionally independent commuting
first integrals of  motion  $F_1, ..., F_n$, then variables
$\lambda_1,..., \lambda_n, \mu_1,...,\mu_n$ are {\it separated} if
they are canonical e.q.
$$
\{\lambda_i,\mu_j\}=\delta_{ij},\ \ \{\lambda_i, \lambda_j\}=\{\mu_i, \mu_j\}=0,\ \ i,j=1...n,
$$
and if there exist $n$ relations $\Phi_i$ such that
\begin{equation}
\Phi_i(\lambda_i, \mu_i, F_1,...,F_n)=0,\quad i=1,...,n.
\label{s1}
\end{equation}

The Sklyanin {\it magic recipe} gives  separation variables
$\lambda_i$ as poles of the properly normalized Baker-Akhiezer
function. The canonically conjugated variables are corresponding
eigenvalues of the Lax matrix $L(\lambda_i)$ (for details see
\cite{S1}).

In the case of algebra $e(3)$, since symplectic leaves  are
four-dimensional, one needs to find four separation variables
$\lambda_1, \lambda_2, \mu_1, \mu_2$.

\begin{Proposition}{\label{p5}} For the Hamiltonian systems (i), (ii), (iii) defined in Proposition
\ref{p2a} the separation variables are
\begin{equation}
\begin{aligned}
\lambda_1&=\frac{y}{x},\ \ \mu_1=i\left(I_2\sqrt{x_0^2+z_0^2}-\frac{(\alpha M_1+\beta M_3)x}{y} +
\frac{(\alpha \Gamma_1+\beta \Gamma_3)x^2}{y^2}\right),\\
\lambda_2&=x,\ \ \mu_2=-\frac{i}{x}(\alpha M_1+\beta M_3).
\end{aligned}
\label{s3}
\end{equation}
Corresponding separation relations are:
\begin{equation}
\mu_1^2=\frac{\omega(-\lambda_1)^2-2\Delta(-\lambda_1)\Delta^*(-\lambda_1)}{\lambda_1^4},\
\  \lambda_2\mu_2=-iH_2=const. \label{s4}
\end{equation}
\end{Proposition}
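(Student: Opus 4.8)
The plan is to follow Sklyanin's \emph{magic recipe}: read off $\lambda_1$ as (minus) the zero of the off-diagonal entry of $L(\lambda)$ — equivalently, the $\lambda$-projection of a pole of the normalized Baker--Akhiezer function from Proposition~\ref{p4.1}(a) — take $\mu_1$ to be the eigenvalue of $L$ at that point, and pick up the second pair $(\lambda_2,\mu_2)$ as the ``cyclic'' pair attached to the integral $H_2$. Concretely, from \eqref{2} the entry $L_{12}(\lambda)=\sqrt2\,i\,\Delta(\lambda)/\lambda^{2}=\sqrt2\,i\,(x\lambda+y)/\lambda^{2}$ has its only finite non-zero zero at $\lambda=-y/x=-\lambda_1$. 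At $\lambda=-\lambda_1$ the matrix $L$ is lower triangular, so its eigenvalues are $L_{11}(-\lambda_1)=\omega(-\lambda_1)/\lambda_1^{2}$ and $L_{22}(-\lambda_1)=-\omega(-\lambda_1)/\lambda_1^{2}$; substituting $\omega(\lambda)=-i(q\lambda^{2}+x_1\lambda+y_1)$ together with $q=I_2\sqrt{x_0^2+z_0^2}$, $x_1=\alpha M_1+\beta M_3$, $y_1=\alpha\Gamma_1+\beta\Gamma_3$, and $1/\lambda_1=x/y$, one checks that $L_{22}(-\lambda_1)=i\bigl(q-x_1/\lambda_1+y_1/\lambda_1^{2}\bigr)$ is precisely the $\mu_1$ of \eqref{s3}. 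This identifies $(\lambda_1,\mu_1)$ and $(\lambda_2,\mu_2)=(x,-i x_1/x)$ as the candidate separating coordinates.

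Next I would verify the separation relations \eqref{s4}. Since $\Delta(-\lambda_1)=-x\lambda_1+y=0$, evaluating the spectral-curve equation in the form \eqref{0.2} at $\lambda=-\lambda_1$ annihilates the term $2\Delta\Delta^{*}/\lambda^{4}$ and leaves $\mu^{2}=\omega(-\lambda_1)^{2}/\lambda_1^{4}$; as $\mu_1=-\omega(-\lambda_1)/\lambda_1^{2}$ this is exactly the first equation of \eqref{s4}. Because the coefficients $p,a,b,c,d$ of the spectral curve \eqref{0.1} are constants of the motion, expressible through $F_1,F_2,H_1,H_2$ (this being what makes cases (i)--(iii) integrable), that relation is a genuine separation relation of the form \eqref{s1} in $(\lambda_1,\mu_1)$. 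For the second pair, $\lambda_2\mu_2=x\cdot(-i x_1/x)=-i x_1=-i(\alpha M_1+\beta M_3)$, which is a constant multiple of $H_2=x_0M_1+z_0M_3$ and hence an integral by Proposition~\ref{p3}; this is the trivial separation relation $\Phi_2(\lambda_2,\mu_2,H_2)=0$, reflecting $H_2$ being the momentum conjugate to a cyclic variable (note $\{x,H_2\}$ is proportional to $x$).

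It remains to prove canonicity, $\{\lambda_i,\mu_j\}=\delta_{ij}$ and $\{\lambda_1,\lambda_2\}=\{\mu_1,\mu_2\}=0$. I would do this by a direct computation in the variables $x,y,\bar x,\bar y,x_1,y_1$, using only the Poisson relations among them listed after \eqref{3}: one observes that $\lambda_1=y/x$, $\mu_2=-i x_1/x$, and $\mu_1=i\bigl(q-x_1/\lambda_1+y_1/\lambda_1^{2}\bigr)$ involve only $x,y,x_1,y_1$, and that the relevant building blocks are $\{x,y\}=0$, $\{x,x_1\}=ix$, $\{y,x_1\}=iy$, $\{x,y_1\}=iy$, $\{y,y_1\}=0$, $\{y_1,x_1\}=0$. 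Leibniz then gives $\{\lambda_1,x_1\}=0$, $\{\lambda_1,y_1\}=-i\lambda_1^{2}$, $\{\lambda_1,x\}=0$, whence $\{\lambda_1,\mu_1\}=1$ and $\{\lambda_1,\mu_2\}=\{\lambda_1,\lambda_2\}=0$; similarly $\{\lambda_2,\mu_2\}=\{x,-i x_1/x\}=1$ and $\{\mu_1,x\}=0$, and a short computation of $\{x_1,\mu_2\}$ and $\{y_1,\mu_2\}$ yields $\{\mu_1,\mu_2\}=0$. Alternatively, the internal brackets of the pair $(\lambda_1,\mu_1)$ follow structurally from the linear $r$-matrix relation \eqref{4.5} via Sklyanin's construction, which is exactly its intended use. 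Finally, $\lambda_1,\lambda_2,\mu_1,\mu_2$ are functionally independent on a generic symplectic leaf, so they form a complete separating coordinate system, and the Proposition follows.

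The main obstacle, modest as it is, lies in the bookkeeping of the last step: $\mu_1$ is quadratic in $1/\lambda_1$ with $\lambda_1=y/x$ itself a quotient, so the identities $\{\lambda_1,\mu_1\}=1$ and especially $\{\mu_1,\mu_2\}=0$ rest on several cancellations that must be tracked carefully; one must also be attentive to the scalar $\sqrt{x_0^2+z_0^2}$ relating $x_1$ to $H_2$ in the statement $\lambda_2\mu_2=-iH_2$.
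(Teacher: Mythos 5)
Your argument is correct and is essentially the paper's own proof, written out in detail: the paper likewise obtains $(\lambda_1,\mu_1)$ from Sklyanin's recipe $(1\ 0)\adj(L(\lambda)-\mu\cdot 1)=0$ with normalization $\vec{\alpha}_0=(1\ 0)$ (i.e.\ the zero of $L_{12}$, hence of $\Delta$, together with the eigenvalue $L_{22}$ there), takes $(\lambda_2,\mu_2)$ from the asymptotics of $L(\lambda)$ as $\lambda\to\infty$, and then verifies canonicity by direct computation with the listed brackets, exactly as you sketch. Your bookkeeping remark that $\lambda_2\mu_2=-ix_1=-iH_2/\sqrt{x_0^2+z_0^2}$, so the second separation relation holds only up to the constant factor $\sqrt{x_0^2+z_0^2}$, is accurate and points to a normalization the paper glosses over.
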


\begin{proof}
Following \cite{S1, KNS}, the separation variables for the systems
\eqref{2} satisfy the equation
\begin{equation}
(1\ 0)\adj(L(\lambda)-\mu\cdot 1)=0,
\label{s2}
\end{equation}
which corresponds to the standard normalization
$\vec{\alpha}_0=(1\ 0)$. From \eqref{s2} one  gets only one pair
of separation variables $\lambda_1, \mu_1$. The other two
variables $\lambda_2, \mu_2$ are calculated from the asymptotics
of the matrix $L(\lambda)$ when $\lambda$ goes to  infinity.

By direct calculations we see that variables are canonical.
\end{proof}

In terms of $H_1, H_2, F_1, F_2$, the separation relations can be
rewritten in the form:
$$
\begin{aligned}
\mu_1^2=\frac{1}{\lambda_1^4}\big(-I_2^2(x_0^2+z_0^2)\lambda_1^4+&2I_2\sqrt{x_0^2+z_0^2}\ H_2\lambda_1^3
-2I_2H_1\lambda_1^2+2F_1\lambda_1-F_2\big),\\
H_2&=i\lambda_2\mu_2.
\end{aligned}
$$

\section{Connection with Hess-Appel'rot system}

In this section we find  a sort of separation of variables for the
Hess-Appel'rot case of motion of heavy rigid body about fixed
point.

The equations of motion are the Euler-Poisson equations:
\begin{equation}
\dot{M}=M\times \Omega+\Gamma\times\chi,\quad \dot{\Gamma}=\Gamma\times\Omega.
\label{12}
\end{equation}
Here $M$ is the kinetic momentum vector, $\Omega$ is the vector of
angular velocity, $\Gamma$ is the unit vertical vector, and
$\chi=(x_0,y_0,z_0)$ is the radius vector of the center of masses.
Connection between $M$ and $\Omega$ is given by $M=I\Omega$, where
$I$ is the inertia operator, and we can assume that
$I=\diag(I_1,I_2,I_3)$.

Equations \eqref{12} are Hamiltonian on the Lie algebra $e(3)$ in
the standard Poisson structure given by \eqref{3} with the
Hamiltonian function:
\begin{equation}
H_{RB}=\frac12\langle M,\Omega\rangle+\langle \Gamma, \Omega\rangle.
\label{13}
\end{equation}
Hence, for integrability in the Liouville sense of equations
\eqref{12}, one needs one first integral of motion more.

The Hess-Appel'rot case of rigid body is introduced by Hess and Appel'rot (see \cite{Ap, H}). It is defined by
conditions
\begin{equation}
\begin{aligned}
y_0&=0,\\
x_0\sqrt{I_1(I_2-I_3)}&+z_0\sqrt{I_3(I_1-I_2)}=0.
\end{aligned}
\label{14}
\end{equation}
Instead of the fourth integral, in the Hess-Appel'rot case there
is an invariant relation
\begin{equation}
x_0M_1+z_0M_3=0.
\label{15}
\end{equation}

In \cite{DG1} (see also \cite{DG2}) a Lax representation for the
Hess-Appel'rot system was found. The spectral curve is reducible.
One component is a rational curve, and the other one is an
elliptic curve. In order to find a sort of separation of variables
for this specific situation, we give first another Lax
representation. It is based on following observation:
\begin{Proposition}{\label{p4}} On hypersurface \eqref{15} the equations of the Hess-Appel'rot system
are equivalent to the Lax representation \eqref{2} with
$a=\frac{\alpha\Omega_1+\beta\Omega_2}{\sqrt{x_0^2+z_0^2}}$.
\end{Proposition}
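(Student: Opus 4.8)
The plan is to write both sides of the claimed equivalence in vector form on $e(3)$ and to reduce the whole statement to a single pointwise identity on the hypersurface \eqref{15}. Using $y_0=0$ (the first Hess-Appel'rot condition in \eqref{14}) together with $M=I\Omega$, the Euler-Poisson system \eqref{12} is $\dot M=M\times\Omega+\Gamma\times\chi$, $\dot\Gamma=\Gamma\times\Omega$ with $\chi=(x_0,0,z_0)$, while a direct rewriting of the coordinate form \eqref{5} of the Lax flow \eqref{4} (with Lax matrix \eqref{2}) in vector notation gives $\dot M=\Gamma\times\chi+a\,(M\times\chi)$ and $\dot\Gamma=\frac{1}{I_2}\,\Gamma\times M+a\,(\Gamma\times\chi)$. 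Subtracting, the two systems coincide at a point precisely when $M\times(\Omega-a\chi)=0$ and $\Gamma\times(\Omega-\frac{1}{I_2}M-a\chi)=0$, and both of these follow from the single identity $\Omega-\frac{1}{I_2}M=a\,\chi$. Since $\Omega$ and $\frac{1}{I_2}M$ have equal second components, that identity is in turn equivalent to the two scalar relations $a\,x_0=M_1\frac{I_2-I_1}{I_1I_2}$ and $a\,z_0=M_3\frac{I_2-I_3}{I_3I_2}$, so it will suffice to verify these on \eqref{15} for the polynomial of the statement, which in the generators of $e(3)$ is $a=\frac{1}{x_0^2+z_0^2}\bigl(\frac{x_0}{I_1}M_1+\frac{z_0}{I_3}M_3\bigr)=\frac{\alpha\Omega_1+\beta\Omega_3}{\sqrt{x_0^2+z_0^2}}$ (a degree-one polynomial, hence a legitimate choice in \eqref{4}).

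The verification is a short computation in which the two Hess-Appel'rot conditions play complementary roles: $y_0=0$ is what gives $\Gamma\times\chi$ and $M\times\chi$ the exact shape appearing in \eqref{5}, while the second condition in \eqref{14}, after being squared, becomes $x_0^2 I_1(I_2-I_3)=z_0^2 I_3(I_1-I_2)$. On \eqref{15} one has $z_0M_3=-x_0M_1$, hence $z_0\Omega_3=-\frac{x_0I_1}{I_3}\Omega_1$; substituting this into the definition of $a$ yields $a\,x_0=\frac{x_0^2(I_3-I_1)}{I_3(x_0^2+z_0^2)}\,\Omega_1$, and the squared condition rewrites the right-hand side as $\Omega_1\frac{I_2-I_1}{I_2}=M_1\frac{I_2-I_1}{I_1I_2}$, which is the first relation; the second follows by the symmetric argument with the roles of $(x_0,I_1,\Omega_1)$ and $(z_0,I_3,\Omega_3)$ interchanged. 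As a consistency check, $x_0\cdot(\text{first})+z_0\cdot(\text{second})$, combined with \eqref{15} once more, returns exactly the defining formula $a\,(x_0^2+z_0^2)=x_0\Omega_1+z_0\Omega_3$. Hence the identity $\Omega-\frac{1}{I_2}M=a\,\chi$ holds at every point of \eqref{15}.

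Finally I would phrase the conclusion in the correct sense. Since \eqref{15} is an invariant relation of the Hess-Appel'rot flow, the Euler-Poisson vector field is tangent to the hypersurface $\{x_0M_1+z_0M_3=0\}$; by the identity just established the Lax vector field \eqref{4} with the above $a$ agrees with it at every point of this hypersurface, is therefore tangent to it as well, and the two flows restricted to the hypersurface are identical, which is the asserted equivalence. The only slightly delicate point is the algebraic step of the second paragraph, where \eqref{14} must be squared before it can be combined with \eqref{15}; everything else is direct substitution, and the degenerate sub-cases in which one of $x_0,z_0$ vanishes (forcing $I_2$ to coincide with $I_1$ or $I_3$) are immediate.
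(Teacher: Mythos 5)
Your verification is correct, and in fact it supplies an argument the paper itself does not give: Proposition \ref{p4} is stated as an ``observation'' with the proof delegated to the thesis \cite{G}, so there is no proof in the paper to compare against. Your route is the natural direct one: writing \eqref{5} in vector form as $\dot M=\Gamma\times\chi+a\,M\times\chi$, $\dot\Gamma=\frac1{I_2}\Gamma\times M+a\,\Gamma\times\chi$, reducing coincidence with the Euler--Poisson field \eqref{12} to the single identity $\Omega-\frac1{I_2}M=a\chi$, and checking the two resulting scalar relations on \eqref{15} using the squared condition \eqref{14}; I reproduced the algebra ($I_2x_0^2(I_3-I_1)=I_3(x_0^2+z_0^2)(I_2-I_1)$ and its twin) and it is right, as are the tangency remark and the degenerate cases $x_0=0$ or $z_0=0$, where \eqref{15} forces $M_3=0$ resp.\ $M_1=0$ and the relations hold trivially.

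One point you should make explicit rather than fix silently: you replaced the $\Omega_2$ of the statement by $\Omega_3$, i.e.\ you proved the claim for $a=\frac{\alpha\Omega_1+\beta\Omega_3}{\sqrt{x_0^2+z_0^2}}$, the projection of $\Omega$ on the axis $\chi/|\chi|$. This is the correct reading — with $\Omega_2$ as literally printed the required relation $ax_0=M_1\frac{I_2-I_1}{I_1I_2}$ cannot hold at generic points of \eqref{15}, since the left side would contain an $M_2$-dependent term absent from the right side (and your necessity argument via $\Omega-a\chi\parallel M$ and the second component shows the identity is forced, not merely sufficient, wherever $M_2\neq0$) — so the statement contains a typo, and a proof of it should say so. A minor stylistic remark: the equivalence also uses that the Lax flow \eqref{4} is tangent to \eqref{15}; you get this from the pointwise agreement with the Hess--Appel'rot field, but it follows even more directly from the fact that $H_2=x_0M_1+z_0M_3$ is a first integral of \eqref{5} for any $a$ (Proposition 4(b) of the paper).
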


This Lax representation for the Hess-Appel'rot system was given in
\cite{G}. The Lax representation of the same form for the Lagrange
top was constructed in \cite{KPR}.

As a consequence of Propositions \ref{p4} and \ref{p5}, we have:

\begin{Proposition}{\label{p6}} The separation variables for the  Hess-Appel'rot system are
\begin{equation}
\begin{aligned}
\lambda_1&=\frac{y}{x},\ \
\mu_1=i\left(I_2\sqrt{x_0^2+z_0^2}-\frac{(\alpha M_1+\beta
M_3)x}{y} +
\frac{(\alpha \Gamma_1+\beta \Gamma_3)x^2}{y^2}\right),\\
\lambda_2&=x,\ \ \mu_2=-\frac{i}{x}(\alpha M_1+\beta M_3).
\end{aligned}
\label{s5}
\end{equation}
Corresponding separation relations are:
\begin{equation}
\mu_1^2=\frac{\omega(-\lambda_1)^2-2\Delta(-\lambda_1)\Delta^*(-\lambda_1)}{\lambda_1^4},\
\  \mu_2=0. \label{s6}
\end{equation}
\end{Proposition}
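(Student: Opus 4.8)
The plan is to deduce the statement directly from Propositions \ref{p4} and \ref{p5}, the only new input being the way the invariant hypersurface \eqref{15} interacts with the second separation relation. First I would invoke Proposition \ref{p4}: on the hypersurface \eqref{15} the Hess-Appel'rot flow \eqref{12} is isospectral for a Lax matrix of the form \eqref{2}, namely the system \eqref{4} with $a=(\alpha\Omega_1+\beta\Omega_2)/\sqrt{x_0^2+z_0^2}$. The key remark is that everything in Proposition \ref{p5} that produces $\lambda_1,\mu_1,\lambda_2,\mu_2$ and the first of the relations \eqref{s4} uses only the \emph{shape} of $L(\lambda)$ in \eqref{2}, never the particular choice of $a$: the pair $(\lambda_1,\mu_1)$ is defined by the purely algebraic condition \eqref{s2} applied to $L$, the pair $(\lambda_2,\mu_2)$ is read off from the $\lambda\to\infty$ asymptotics of $L(\lambda)$, and the relation $\mu_1^2=(\omega(-\lambda_1)^2-2\Delta(-\lambda_1)\Delta^*(-\lambda_1))/\lambda_1^4$ is just the spectral curve relation \eqref{0.2} rewritten at the point determined by $\lambda_1,\mu_1$. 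Hence \eqref{s5} and the first equation of \eqref{s6} carry over verbatim.

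What remains is to replace the relation $\lambda_2\mu_2=-iH_2$ of Proposition \ref{p5} by $\mu_2=0$. For this I would use $\alpha=x_0/\sqrt{x_0^2+z_0^2}$, $\beta=z_0/\sqrt{x_0^2+z_0^2}$ to write
$$
\lambda_2\mu_2=-i(\alpha M_1+\beta M_3)=-\frac{i}{\sqrt{x_0^2+z_0^2}}\,(x_0M_1+z_0M_3),
$$
so that on the hypersurface \eqref{15} one has $\lambda_2\mu_2=0$; since $\lambda_2=x$ is not identically zero there, this forces $\mu_2=0$, which is the second relation in \eqref{s6}. I would also note in passing that the canonicity of $(\lambda_1,\lambda_2,\mu_1,\mu_2)$ is inherited from Proposition \ref{p5}: the brackets $\{\lambda_i,\mu_j\}$, $\{\lambda_i,\lambda_j\}$, $\{\mu_i,\mu_j\}$ were computed there from the intrinsic Poisson relations \eqref{3} of $e(3)$ and never involved the equations of motion, so they are unaffected both by the change of $a$ and by the restriction to \eqref{15}.

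The only genuine subtlety — a point I would flag rather than a computational obstacle — is the status of the word ``separation'' here. The Hess-Appel'rot system is not Liouville integrable on all of $e(3)$: in place of a fourth integral it possesses only the invariant relation \eqref{15}. Thus the second ``separation relation'' $\mu_2=0$ is not of the standard form $\Phi_2(\lambda_2,\mu_2,F_1,\dots)=0$ with independent commuting $F_i$; it should instead be read as the equation, in the separation coordinates, cutting out the invariant hypersurface \eqref{15}. Once one is on \eqref{15}, the remaining reduced motion is governed solely by $\mu_1^2=(\omega(-\lambda_1)^2-2\Delta(-\lambda_1)\Delta^*(-\lambda_1))/\lambda_1^4$, exactly as in cases (i)--(iii); this is what the statement means by ``a sort of separation of variables.'' I expect no hard step beyond making this interpretive point precise.
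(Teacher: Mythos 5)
Your proposal follows exactly the paper's route: the paper states Proposition \ref{p6} purely as a consequence of Propositions \ref{p4} and \ref{p5}, and your argument fills in precisely the intended details -- the separation variables and the first relation of \eqref{s6} carry over because they depend only on the form \eqref{2} of $L(\lambda)$, while the invariant relation \eqref{15} turns $\lambda_2\mu_2=-iH_2$ into $\mu_2=0$. Your closing remark on the interpretive status of $\mu_2=0$ (an invariant relation rather than a fourth integral) matches the paper's own phrase ``a sort of separation of variables,'' so there is nothing to correct.
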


In terms of $H_1, H_2, F_1, F_2$ separation relations can be
rewritten in the form:
$$
\begin{aligned}
\mu_1^2=\frac{1}{\lambda_1^4}\big(-I_2^2(x_0^2+z_0^2)\lambda_1^4+&2I_2\sqrt{x_0^2+z_0^2}\ H_2\lambda_1^3
-2I_2H_1\lambda_1^2+2F_1\lambda_1-F_2\big),\\
i\lambda_2\mu_2=0.
\end{aligned}
$$

\section{Acknowledgments}

The research was partially supported by the Serbian Ministry of Sciences and Technology,
Project {\it Geometry and Topology of Manifolds and Integrable Dynamical Systems}. A part of the
article has been written during the visit of one of the authors (V. D.) to the IHES in Autumn 2008, and he
uses the opportunity to thank the IHES for hospitality and outstanding working conditions.

\

\small

\sc

Vladimir Dragovi\' c

 Mathematical Institute  SANU

 Kneza Mihaila 36, 11000 Belgrade, Serbia

and

Mathematical Physics Group, University of Lisbon, Portugal

{\rm e-mail: vladad@mi.sanu.ac.rs}

\

Borislav Gaji\' c

Mathematical Institute  SANU

Kneza Mihaila 36, 11000 Belgrade, Serbia

{\rm e-mail: gajab@mi.sanu.ac.rs}

\end{document}